\begin{document}

\title{Approximation algorithm for the minimum directed tree cover} 
\author{Viet Hung Nguyen}

\institute{LIP6, Universit\'e Pierre et Marie Curie Paris 6, 4 place Jussieu, Paris, France}

\date{Received: date / Accepted: date}

\maketitle

\begin{abstract}
Given a directed graph $G$ with non negative cost on the arcs, a directed tree cover of $G$ is a rooted directed tree such that 
either head or tail (or both of them) of every arc in $G$ is touched by $T$. The minimum directed tree 
cover  problem (DTCP) is to find a directed tree cover of minimum cost. The problem is known to be $NP$-hard. In this paper,  we show that the weighted  Set Cover Problem (SCP) 
is a special case of DTCP. Hence, one can expect at best to approximate DTCP with the same ratio as for SCP.
We show that this expectation can be satisfied in some way by designing a purely combinatorial  approximation algorithm for the DTCP and proving that the approximation 
ratio of the algorithm is $\max\{2, \ln(D^+)\}$  with $D^+$ is the maximum outgoing degree of the nodes in $G$. 
\end{abstract}
\section{Introduction}
\label{}
Let $G=(V,A)$ be a directed graph with a (non negative) cost function 
$c: A \Rightarrow \mathbb{Q}_{+}$ defined on the arcs. Let $c(u,v)$ denote the cost 
of the arc $(u,v) \in A$.  
A \textit{directed tree} cover is a weakly connected subgraph $T=(U,F)$ such that
\begin{enumerate}
\item for every $e \in A$,  $F$ contains an arc $f$ intersecting $e$, i.e. 
 $f$ and $e$ have an end-node in common.
\item $T$ is a rooted branching.  
\end{enumerate}
The \textit{minimum directed tree 
cover problem} (DTCP) is to find a directed tree  cover of minimum cost.
Several related problems to DTCP have been investigated, in particular:
\begin{itemize}
 \item  its undirected counterpart, the minimum tree cover problem (TCP) and
\item the tour cover problem in which $T$ is a tour (not necessarily simple) instead of a tree.
This problem has also two versions: undirected (ToCP) and directed (DToCP).
\end{itemize}
We discuss first about TCP which has been intensively studied in recent years.
The TCP is introduced in 
a paper by Arkin et al. \cite{Arkin} where they were motivated by a problem of locating tree-shaped facilities on a graph such that 
all the nodes are dominated by chosen facilities. They proved the $NP$-hardness of TCP by observing that 
the unweighted case of TCP is equivalent to the \textit{connected vertex cover} problem, which in fact is known to be as hard (to approximate) 
as the vertex cover problem \cite{Garey}.
 Consequently, DTCP is also $NP$-hard since the TCP can be easily transformed to an instance of DTCP by replacing every edge 
by the two arcs of opposite direction between the two end-nodes of the edge.
In their paper, Arkin et al. presented a 2-approximation algorithm for the unweighted case of TCP, as well as 3.5-approximation algorithm for general costs. 
Later, Konemann et al. \cite{Koneman} and Fujito \cite{Fujito1} independently designed a 3-approximation algorithm
for TCP using a bidirected formulation. 
They solved a linear program (of exponential size) to find a vertex cover $U$ and then 
they found a Steiner tree with $U$ as the set of terminals. 
Recently, Fujito \cite{Fujito} and Nguyen \cite{Nguyen1} 
propose separately two different approximation algorithms achieving 2 the currently best approximation ratio.
Actually, the algorithm in \cite{Nguyen1} is expressed for the TCP when costs satisfy the triangle inequality but one can suppose this 
for the general case without loss generality. The algorithm in \cite{Fujito} is very interesting in term of complexity since it is 
a primal-dual based algorithm and thus purely combinatorial. In the prospective section of \cite{Koneman} and \cite{Fujito}, the authors presented  DTCP as a wide open problem for further research on the topic. 
In particular, Fujito  \cite{Fujito} pointed out that his approach for TCP can be extended to give 
a 2-approximation algorithm for the unweighted case of DTCP but falls short once arbitrary costs are allowed.  \\
For ToCP,  a 3-approximation algorithm has been developed in \cite{Koneman}. The principle of this algorithm is 
similar as for TCP, i.e. it solved a linear program (of exponential size) to find a vertex cover $U$ and then 
found a traveling salesman tour over the subgraph induced by $U$.
Recently, Nguyen \cite{Nguyen2} considered  DToCP and extended the approach in \cite{Koneman} to obtain a $2\log_2(n)$-approximation algorithm for DToCP. We can similarly adapt the method in 
\cite{Koneman} for TCP to DTCP but we will have to find a directed Steiner tree 
with $U$ a vertex cover as the terminal set. Using the best known approximation algorithm by Charikar et al. \cite{Charikar} for the minimum Steiner directed tree problem, we obtain a ratio of $(1 +\sqrt{|U|}^{2/3}log^{1/3}(|U|))$ for DTCP which is worse than a logarithmic ratio.\\
In this paper, we improve this ratio by giving a logarithmic ratio approximation algorithm for DTCP. In particular, we
show that the weighted Set Cover Problem (SCP) is a special case of DTCP and the transformation is approximation preserving. Based on the known complexity results for SCP, we can only expect a logarithmic ratio for the approximation of DTCP. Let $D^+$ be the maximum outgoing degree of the nodes in $G$, we design a primal-dual $\max\{2, \ln(D^+)\}$-approximation algorithm for DTCP which is thus somewhat best possible.\\ 
The paper is organized as follows. In the remaining of this section, we will define the notations that will be used in the papers. In Section 2, we present an integer formulation  and  state a primal-dual algorithm for DTCP. Finally, we prove the validity of the algorithm and its approximation ratio. \\ 
Let us introduce the notations that will be used in the paper. 
Let $G=(V,A)$ be a digraph with vertex set $V$ and arc set $A$. 
Let $n=|V|$ and $m=|A|$.
If $x \in \mathbb{Q}^{|A|}$ is a vector indexed by the 
arc set $A$ and $F \subseteq E$ is a subset of arcs, we use $x(F)$ to 
denote the sum of values of $x$ on the arcs in $F$, $x(F)=\sum_{e \in F}x_e$.
Similarly, for a vector $y \in \mathbb{Q}^{|V|}$ indexed by the nodes and $S \subseteq V$ is a subset of nodes, let $y(S)$ denote  the sum of 
values of $y$ on the nodes in the set $S$. 
For a subset of nodes $S \subseteq V$, let $A(S)$ denote the set of the arcs having both end-nodes in $S$. Let $\delta^+(S)$(respectively $\delta^-(S)$) denote the set of the arcs having only the tail (respectively head) in $S$. 
We will call $\delta^+(S)$ the \textit{outgoing cut} associated to $S$, $\delta^-(S)$ the \textit{ingoing cut} associated to $S$.
For two subset $U, W \subset V$ such that $U \cap W =\emptyset$, let $(U:W)$ be the set of the arcs having the tail in $U$ and 
the head in $W$.  For $u \in V$, we say $v$ an \textit{outneighbor} (respectively \textit{inneighbor}) of $u$ if $(u,v) \in A$ (respectively 
$(v,u) \in A$). For the sake of simplicity, in clear contexts, the singleton $\{u\}$ will be denoted simply by $u$.\\
For an arc subset $F$ of $A$, let $V(F)$ denote the set of end-nodes of all the arcs in $F$. 
We say $F$ \textit{covers} a vertex subset $S$ if $F \cap \delta^-(S) \neq \emptyset$. We say $F$ is a cover for the graph $G$ if 
for all arc $(u,v) \in A$, we have $F \cap \delta^-(\{u,v\}) \neq \emptyset$.\\
When we work on more than one graph, we specify the graph in the index of the notation, e.g. $\delta^+_{G}(S)$ will denote $\delta^+(S)$ 
in the graph $G$. By default, the notations without indication of the the graph in the index are applied on $G$.
\section{Minimum $r$-branching cover problem}
Suppose that $T$ is a directed tree cover of $G$ rooted in $r \in V$, i.e. $T$ is a branching, $V(T)$ is a vertex cover in $G$ and there is a directed path in $T$ from $r$ to any other node in $V(T)$. In this case, we call $T$, a 
\textit{$r$-branching cover}. Thus, DTCP can be divided into $n$ subproblems in which we find a minimum $r$-branching cover for all $r \in V$.
By this observation, in this paper,  we will focus on approximating the minimum $r$-branching cover for a specific vertex $r \in V$.
An approximation algorithm for DTCP is then simply resulted from applying $n$ times the algorithm  for the minimum $r$-branching cover for each 
$r \in V$.
\subsection{Weighted set cover problem as a special case}
Let us consider any instance $\mathcal{A}$ of the weighted Set Cover Problem (SCP) with a set $E=\{e_1,e_2,\ldots,e_p\}$ of ground elements, 
and a collection of subsets $S_1$, $S_2$, \ldots, $S_q \subseteq E$ with corresponding non-negative weights $w_1$, $w_2$, \ldots, $w_q$.
The objective is to find a set $I \subseteq\{1,2,\ldots,q\}$ that minimizes $\displaystyle \sum_{i \in I}w_i$, such that $\displaystyle \bigcup_{i \in I}S_i=E$.
We transform this instance to an instance of the minimum $r$-branching cover problem in some graph $G_1$ as follows. We create a node $r$, $q$ nodes $S_1$, $S_2$, \ldots, $S_q$ and
 $q$ arcs $(r,S_i)$ with weight $w_i$. We then add $2p$ new nodes $e_1$, \ldots, $e_p$ and $e'_1$, \ldots, $e'_p$. If $e_k \in S_i$ for some $1 \leq k \leq p$ and 
$1 \leq i \leq q$, we create an arc $(S_i,e_k)$ with weight 0 (or a very insignificant positive weight). At last, we add an arc $(e_k,e'_k)$ of weight 0 (or a very insignificant positive weight) for each $1 \leq k \leq p$. 
\begin{lemma}
 Any $r$-branching cover in $G_1$ correspond to a set cover in $\mathcal{A}$ of the same weight and vice versa.
\end{lemma}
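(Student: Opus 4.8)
The plan is to give explicit constructions in both directions and to verify that the cost is preserved exactly, taking the zero-weight option for all arcs other than the $(r,S_i)$. The structural fact I would establish first is that in any $r$-branching cover $T$, every ground node $e_k$ must belong to $V(T)$ and must be reached from $r$ along an arc $(S_i,e_k)$ with $e_k\in S_i$. This is exactly what the pendant arc $(e_k,e'_k)$ is designed to enforce: since $V(T)$ is a vertex cover, either $e_k$ or $e'_k$ lies in $V(T)$; but $e'_k$ is a sink whose only in-arc is $(e_k,e'_k)$, so reaching $e'_k$ from $r$ in the branching forces $(e_k,e'_k)\in T$ and hence $e_k\in V(T)$ as well. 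Thus $e_k\in V(T)$ in all cases, and since the only arcs entering $e_k$ are the $(S_i,e_k)$, the unique $r$--$e_k$ path in $T$ terminates with such an arc for some $i$ with $e_k\in S_i$; and as the only arc entering $S_i$ is $(r,S_i)$, this in turn forces $(r,S_i)\in T$.

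For the forward direction I would set $I=\{i : (r,S_i)\in T\}$. The reachability argument above shows that every element $e_k$ lies in some $S_i$ with $i\in I$, so $\bigcup_{i\in I}S_i=E$ and $I$ is a set cover of $\mathcal{A}$. Because the only arcs of positive weight are the $(r,S_i)$, the cost of $T$ equals $\sum_{i\in I}w_i$, which is precisely the weight of the set cover $I$.

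For the reverse direction, given a set cover $I$ I would build $T$ by taking the arc $(r,S_i)$ for every $i\in I$, choosing for each element $e_k$ one index $i(k)\in I$ with $e_k\in S_{i(k)}$ and adding $(S_{i(k)},e_k)$, and finally adding every pendant arc $(e_k,e'_k)$. I would then check the two defining properties: $T$ is a branching rooted at $r$, since $r$ has in-degree zero while each $S_i$, each $e_k$, and each $e'_k$ has in-degree one and is reachable from $r$; and $V(T)$ is a vertex cover, since the arcs $(r,S_i)$ are covered at $r$ and the arcs $(S_i,e_k)$ and $(e_k,e'_k)$ are all covered at the nodes $e_k$, which lie in $V(T)$. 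Its cost is again $\sum_{i\in I}w_i$, matching the weight of $I$.

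The only delicate point is coupling the weight preservation with the reachability argument in the forward direction: one must show that it is the \emph{branching} structure, not merely the vertex-cover condition, that forces each covered element to be paid for through an arc $(r,S_i)$. The pendant gadget $e'_k$ is exactly the device that rules out the cheap alternative of covering $(S_i,e_k)$ by placing $S_i$ in the cover without ever reaching $e_k$; I would emphasize this as the reason the reduction is cost-exact and hence approximation-preserving.
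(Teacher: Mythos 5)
Your proof is correct and follows essentially the same route as the paper: the pendant arcs $(e_k,e'_k)$ force every ground node $e_k$ into $V(T)$, reachability from $r$ then forces the payment through arcs $(r,S_i)$ with $e_k\in S_i$, and the converse is an explicit tree construction of the same weight. You are in fact more careful than the paper's own proof, which silently skips the case where the vertex-cover condition at $(e_k,e'_k)$ is met by $e'_k$ alone; your observation that reaching $e'_k$ from $r$ forces $(e_k,e'_k)\in T$, and hence $e_k\in V(T)$, closes that small gap.
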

\begin{proof}
 Let us consider any $r$-branching cover $T$ in $G_1$. Since $T$ should cover all the arcs $(e_k,e'_k)$ for $1 \leq k \leq n$, $T$ contains 
the nodes $e_k$. By the construction of $G_1$, these nodes are connected to $r$ uniquely through the nodes $S_1$, \ldots $S_q$ with the corresponding cost
$w_1$, \ldots, $w_q$. Clearly, the nodes $S_i$ in $T$ constitute a set cover in $\mathcal{A}$ of the same weight as $T$.  It is then easy to see that 
any set cover in $\mathcal{A}$ correspond to $r$-branching cover in $G_1$ of the same weight.
\end{proof}
Let $D^+_r$ be the maximum outgoing degree of the nodes (except $r$) in $G_1$. We can see that $D^+_r=p$, the number of 
ground elements in $\mathcal{A}$. 
Hence, we have  
\begin{corollary}
 Any $f(D^+_r)$-approximation algorithm for the minimum $r$-branching cover problem is also an $f(p)$-approximation algorithm for SCP where $f$ is a function from $\mathbb{N}$ to $\mathbb{R}$.
\end{corollary}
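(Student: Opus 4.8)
The plan is to chain together the two facts already in hand: the weight-preserving correspondence of the Lemma and the identity $D^+_r = p$. First I would fix an arbitrary instance $\mathcal{A}$ of SCP and build the graph $G_1$ exactly as in the transformation above. Since $G_1$ has $q + 2p + 1$ nodes and $O(pq)$ arcs, this construction is polynomial in the size of $\mathcal{A}$, so prepending it to any polynomial-time algorithm for the $r$-branching cover problem keeps the overall procedure polynomial.

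Next I would invoke the hypothesized $f(D^+_r)$-approximation algorithm on the $r$-branching cover instance $G_1$, obtaining a branching cover $T$ with $c(T) \le f(D^+_r)\cdot \mathrm{OPT}_{\mathrm{br}}(G_1)$, where $\mathrm{OPT}_{\mathrm{br}}(G_1)$ denotes the cost of a minimum $r$-branching cover of $G_1$. The Lemma then converts $T$ back into a set cover $I$ of $\mathcal{A}$ with $w(I) = c(T)$.

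The key step is to identify the two optimal values. Because the Lemma provides a correspondence in \emph{both} directions that preserves weight exactly (the added arcs $(S_i,e_k)$ and $(e_k,e'_k)$ carry cost $0$), every set cover yields a branching cover of equal cost and conversely; hence an optimal set cover and an optimal branching cover have the same value, i.e. $\mathrm{OPT}_{\mathrm{br}}(G_1) = \mathrm{OPT}_{\mathrm{SCP}}(\mathcal{A})$. Combining this with the identity $D^+_r = p$ established just before the statement, I obtain $w(I) \le f(D^+_r)\cdot\mathrm{OPT}_{\mathrm{br}}(G_1) = f(p)\cdot\mathrm{OPT}_{\mathrm{SCP}}(\mathcal{A})$, which is precisely the claimed $f(p)$-approximation guarantee.

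The only point requiring care is the treatment of the ``very insignificant positive weight'' alternative allowed in the construction: with strictly zero weights the argument is clean, whereas with small positive weights one incurs an additive slack equal to the total weight of those $O(pq)$ arcs. I expect this to be the main (though minor) obstacle, and I would dispose of it either by taking the weights to be exactly $0$, or by observing that they can be chosen small enough that the multiplicative guarantee degrades by an arbitrarily small $\varepsilon$, leaving the asymptotic ratio $f(p)$ intact.
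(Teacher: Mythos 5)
Your proposal is correct and follows exactly the route the paper intends: the paper states this corollary without proof as an immediate consequence of the weight-preserving correspondence in the Lemma together with the observation $D^+_r = p$, and your argument is simply the careful fleshing-out of that chain (run the algorithm on $G_1$, translate the output back, identify the two optima). Your closing remark on the zero versus ``insignificant positive'' weights is a reasonable extra precaution, handled the same way the paper implicitly assumes.
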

Note that the converse is not true.
As a corollary of this corollary, we have the same complexity results for the minimum $r$-branching cover problem as known results for SCP \cite{Lund,Feige,Raz,Arora}. Precisely,
\begin{corollary} \hspace{0.5cm}\\
\begin{itemize}
 \item If there exists a $c\ln(D^+_r)$-approximation algorithm for the minimum $r$-branching cover problem where $c <1$ then $NP \subseteq DTIME(n^{\{O(\ln^k(D^+_r))\}})$.
\item There exists some $0<c<1$ such that if there exists a $c\log(D^+_r)$-approximation algorithm for the minimum $r$-branching cover problem, then $P=NP$.
\end{itemize}
\end{corollary}
Note that this result does not contradict the Fujito's result about an approximation ratio 2 for the unweighted DTCP because in our transformation 
we use arcs of weight 0 (or a very insignificant fractional positive weight) which are not involved in an instance of unweighted DTCP.\\ 
Hence in some sense, the $\max\{2,\ln(D^+_r)\}$ approximation algorithm that we are going to describe in the next sections seems to be best possible for the general weighted DTCP.
\section{Integer programming formulation for minimum $r$-branching cover}\label{ip}
We use  a formulation inspired from the one in \cite{Koneman} designed originally for the TCP.  The formulation  is as follows:
for a fixed root $r$, define $\mathcal{F}$ to be the set of all subsets $S$ of $V \setminus \{r\}$ such that $S$ induces 
at least one arc of $A$,
$$\mathcal{F} = \{S \subseteq V \setminus \{r\} ~ | ~ A(S) \neq \emptyset \}. $$
Let $T$ be the arc set of a directed tree cover of $G$ containing $r$, $T$ is thus a branching rooted at $r$. Now for every $S \in \mathcal{F}$, 
at least one node, saying $v$, in $S$ should belong to $V(T)$. By definition of directed tree cover there is a path from $r$ to $v$ in $T$ and as 
$r \notin S$, this path should contain at least one arc in $\delta^-(S)$.  This allows us to derive the following \textit{cut} constraint which is valid for the DTCP:
$$\sum_{e \in \delta^-(S)}x_e \geq 1 \mbox{ for all $S \in \mathcal{F}$}$$
This leads to  the following IP formulation for the minimum $r$-branching cover.
\[ \min \sum_{e \in A}c(e)x_e \]
$$\sum_{e \in \delta^-(S)}x_e \geq 1 \mbox{ for all $S \in \mathcal{F}$}$$
$$ x \in \{0,1\}^{A}.$$
A trivial case for which this formulation has no constraint is when $G$ is a $r$-rooted star but in this case the optimal solution is trivially 
the central node $r$ with cost 0.\\ 
Replacing the integrity constraints by
$$ x \geq 0, $$
we obtain the linear programming relaxation. We use the DTC($G$) to denote the convex hull of all vectors $x$
satisfying the constraints above (with integrity constraints replaced by $x \geq 0$). 
We express below the dual of $DTC($G$)$:
\[ \max \sum_{S \in \mathcal{F}}y_{S} \]
$$\sum_{S \in \mathcal{F} \mbox{ s.t. } e \in \delta^-(S)}y_S \leq c(e) \mbox{ for all $e \in A$}$$
$$y_S \geq 0 \mbox{ for all $S \in \mathcal{F}$}$$
\section{Approximating the minimum $r$-branching cover}
\noindent
\subsection{Preliminary observations and algorithm overview}
\subsubsection{Preliminary observations}
As we can see, the minimum $r$-branching cover is closely related to the well-known minimum $r$-arborescence problem which finds a minimum 
$r$-branching spanning all the nodes in $G$. Edmonds \cite{Edmonds} gave a linear programming formulation for this problem which consists of the cut constraints 
for all the subsets $S \subseteq V \setminus \{r\}$ (not limited to $S \in \mathcal{F}$). He designed then a primal-dual algorithm (also described in \cite{Chu}) which repeatedly keeps and updates 
a set $A_0$ of zero reduced cost and the subgraph $G_0$ induced by $A_0$ and at each iteration, tries to cover a chosen strongly connected component in $G_0$ by augmenting 
(as much as possible with respect to the current reduced cost) the corresponding dual variable. 
The algorithm ends when all the nodes are reachable from $r$ in $G_0$. The crucial point in the Edmonds' algorithm is that when there still exist nodes not reachable from $r$ in $G_0$, 
there always exists in $G_0$ a strongly connected component to be covered because we can choose trivial strongly connected components which are singletons. We can not do such a thing for 
minimum $r$-branching cover because a node can be or not belonging to a $r$-branching cover. But we shall see that if $G_0$ satisfies a certain conditions, we can use an Edmonds-style primal-dual algorithm to find a $r$-branching cover and to obtain a $G_0$ satisfying such conditions, we should pay a ratio of $\max\{2,\ln(n)\}$. Let us see what could be these conditions. A node $j$ is said \textit{connected to} to another node $i$ (resp. a  connected subgraph $B$) if there is a path from $i$ (resp. a node in $B$) to $j$. 
Suppose that we have found a vertex cover $U$ and a graph $G_0$, we define an \textit{Edmonds connected subgraph} as a non-trivial connected (not necessarily strongly) subgraph $B$ not containing $r$ of $G_0$ such that
given any node $i \in B$ and for all $v \in B \cap U$, $v$ is connected to $i$ in $G_0$. Note that any strongly connected subgraph not containing $r$ in $G_0$ which contains at least 
a node in $U$ is an  Edmonds connected subgraph. As in the definition, for an Edmonds connected subgraph $B$, we will also use abusively $B$ to denote its vertex set.
\begin{theorem} \label{th1}
 If for any node $v \in U$ not reachable from $r$ in $G_0$, we have 
\begin{itemize}
 \item either $v$ belongs to an Edmonds connected subgraph of $G_0$, 
\item or $v$ is connected to an Edmonds connected subgraph of $G_0$.
\end{itemize}
then we can apply an Edmonds-style primal-dual algorithm completing $G_0$ to get a $r$-branching cover spanning $U$ without paying any additional ratio.
\end{theorem}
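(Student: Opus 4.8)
The plan is to realise the ``Edmonds-style'' completion as a primal--dual loop that keeps a dual-feasible vector $y$ together with the reduced costs $\bar c(e)=c(e)-\sum_{S:\,e\in\delta^-(S)}y_S\ge 0$, the tight set $A_0=\{e:\bar c(e)=0\}$ and the subgraph $G_0$ it induces; the invariant maintained throughout is that every arc of $A_0$ carries a tight dual constraint. I would establish three things in turn: (i) as long as some node of $U$ is unreachable from $r$ in $G_0$ the loop can strictly raise the dual objective, (ii) the loop terminates with all of $U$ reachable, and (iii) the branching extracted at the end has cost exactly $\sum_S y_S$, so by weak duality it is optimal and the completion costs nothing beyond the ratio already paid to obtain $U$ and $G_0$.

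For (i), let $R$ be the set of nodes reachable from $r$ in $G_0$ and $W=V\setminus R$. Since a path from $r$ reaching a node of $W$ would have to use an arc of $\delta^-_{G_0}(W)$, we get $\delta^-_{G_0}(W)=\emptyset$: no tight arc enters the unreachable region, exactly the configuration in which Edmonds raises the dual on a source component. The only thing that can go wrong --- and the whole reason the statement is phrased through Edmonds connected subgraphs --- is that the natural source component could be a single node of $U$, which is not in $\mathcal F$ and hence owns no dual variable. Here the hypothesis is used: for an unreachable $v\in U$ it supplies a \emph{non-trivial} Edmonds connected subgraph $B$ with $v\in B$ or $v$ reachable from $B$; taking the backward closure $\widehat B=\{u:u\text{ reaches }B\text{ in }G_0\}$ yields a source set with $\delta^-_{G_0}(\widehat B)=\emptyset$ and $A(\widehat B)\supseteq A(B)\neq\emptyset$, hence $\widehat B\in\mathcal F$. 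I would then raise $y_{\widehat B}$ until an arc of $\delta^-(\widehat B)$ (which is non-empty whenever a feasible $r$-branching cover exists, since $v$ must be reachable from $r$ in $G$) becomes tight, add it to $A_0$, and observe that this strictly enlarges $R$ or merges the closure with another unreachable component.

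Step (ii) is then a standard finiteness argument: each augmentation inserts a new arc into $A_0$ and never removes one, and it monotonically either grows $R$ or decreases the number of components of $G_0[W]$, so after $O(n)$ augmentations no node of $U$ is left unreachable. For (iii) I would run the usual Edmonds expansion: the sets whose dual was raised can be taken backward-closed and hence form a laminar family, so processing them in reverse order of creation and keeping for each raised set exactly one entering arc produces a branching $T\subseteq A_0$ rooted at $r$, spanning $U$, with $|T\cap\delta^-(S)|=1$ for every $S$ with $y_S>0$. Because $V(T)\supseteq U$ is a vertex cover, $T$ is a genuine $r$-branching cover; and since every arc of $T$ is tight,
\[
c(T)=\sum_{e\in T}\sum_{S:\,e\in\delta^-(S)}y_S=\sum_{S\in\mathcal F}y_S\,|T\cap\delta^-(S)|=\sum_{S\in\mathcal F}y_S\le \mathrm{OPT},
\]
the last inequality by weak duality against the relaxation of Section~\ref{ip}. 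Thus the completion is exact.

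I expect the main obstacle to be step (i): proving that the hypothesis always furnishes a \emph{non-trivial} source set in $\mathcal F$, i.e.\ that the backward closure $\widehat B$ genuinely avoids $r$ and stays inside the unreachable region, so that raising $y_{\widehat B}$ is both admissible and productive. This is precisely where $r$-branching cover departs from the arborescence problem --- singletons are unavailable --- and it is what the two bullet conditions are engineered to rescue. A secondary but routine difficulty is verifying the laminarity needed for the ``enter each active set exactly once'' property in the extraction step, which follows the classical Edmonds argument once the augmented sets are taken backward-closed.
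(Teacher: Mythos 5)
Your frame (maintain tight arcs and a feasible dual, raise duals on source sets of $\mathcal F$, extract a branching entering each raised set once, finish by weak duality) is the right one, and steps (i)--(ii) are essentially fine; indeed the obstacle you flag --- that $\widehat B$ avoids $r$ --- is the easy part, since if $r$ reached any node of $B$ it would reach every node of $B\cap U$ (definition of an Edmonds connected subgraph) and hence the unreachable node $v$. The genuine gap is in step (iii), exactly where you depart from the paper. You raise duals on \emph{backward closures} $\widehat B$, but one entering arc into $\widehat B$ does not let a branching span $U\cap\widehat B$: the nodes of $\widehat B\setminus B$ lie \emph{upstream} of $B$, and the Edmonds property only guarantees that $B\cap U$ is reachable from an entry point, not these upstream $U$-nodes. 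Concretely: let $B=\{b_1,b_2\}$ with tight arc $(b_1,b_2)$ and $b_2\in U$; let $w_2,w_3\in U$ with tight arcs $(w_2,b_1)$, $(w_3,b_1)$; feed each $w_j$ by a tight $2$-cycle $B_j=\{p_j,q_j\}$ with $q_j\in U$ and tight arc $(q_j,w_j)$; and let the only non-tight arcs be $(r,b_2)$ of reduced cost $1$ and $(r,p_2),(r,p_3)$ of reduced cost $2$. The hypothesis of the theorem holds, and nothing in your proposal forbids the loop from processing $b_2$ first: then $\widehat B=V\setminus\{r\}$, you raise $y_{\widehat B}=1$ (making $(r,b_2)$ tight, which dead-ends at $b_2$), and subsequently raise $y_{\widehat{B_2}}=y_{\widehat{B_3}}=1$ on the closures $\{p_j,q_j\}$, making $(r,p_j)$ tight. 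The raised family is laminar --- so laminarity, the difficulty you defer to the classical argument, is not what bites --- yet \emph{every} branching spanning $U$ must use both $(r,p_2)$ and $(r,p_3)$, hence crosses $\delta^-(\widehat B)$ twice; its cost is $4$ while $\sum_S y_S=3$, so your identity $c(T)=\sum_S y_S\;|T\cap\delta^-(S)|=\sum_S y_S$ fails and your dual is too weak to certify exactness. (In Edmonds' arborescence algorithm single entry suffices because raised sets are strongly connected in the tight graph; backward closures are not.)

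The paper's proof is organized precisely to avoid this. Its key lemma is existential: as long as some node of $U$ is unreachable, there exists an \emph{uncovered} Edmonds connected subgraph, i.e.\ one with $\delta^-_{G_0}(B)=\emptyset$, so that $B$ itself is already a source set of $\mathcal F$ and no closure is needed. This is proved by a maximality-plus-chain argument: take a maximal Edmonds connected subgraph $B_1$ attached to an unreachable $v_1\in U$; any tight arc covering $B_1$ must come from another unreachable node $v_2\in U$ (otherwise $B_1\cup\{v_2\}$ would be a larger Edmonds connected subgraph), and iterating over $v_1,v_2,v_3,\dots$ must terminate, since $|U|\le n-1$, at an uncovered subgraph. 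Duals are then raised only on Edmonds connected subgraphs, for which a single entering arc provably suffices: every $U$-node inside is reachable from any entry node by definition, and $U$-nodes merely connected to (downstream of) the subgraph come for free. On the instance above this raises $y_{B_2}=y_{B_3}=2$, giving a branching of cost $4=\sum_S y_S$, as required. To repair your argument you would have to replace the closure device by this chain argument --- always raise the dual of a most-upstream, uncovered Edmonds connected subgraph --- at which point you have reconstructed the paper's proof rather than an alternative to it.
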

\begin{proof}
We will prove that if there still exist nodes in $U$ not reachable from $r$ in $G_0$, then there always exists an Edmonds connected subgraph, say $B$, 
uncovered, i.e. $\delta_{G_0}^-(B)=\emptyset$. Choosing any node $v_1 \in U$ not reachable from $r$ in $G_0$, we can see that in both cases, the Edmonds connected subgraph, say $B_1$, of $G_0$ containing $v_1$ or to which $v_1$ is connected, is not reachable from $r$. In this sense we suppose that $B_1$ is maximal. If $B_1$ is uncovered, we have done. If $B_1$ is covered then it should be covered by an arc from a node $v_2 \in U$ not reachable from $r$ because if $v_2 \notin U$ then $B_1 \cup \{v_2\}$ induces an Edmonds connected subgraph which contradicts the fact that $B_1$ is maximal. Similarly, we should have $v_2 \neq v_1$ because otherwise $B_1 \cup \{v_1\}$ induces an Edmonds connected subgraph.
We continue this reasoning with $v_2$, if this process does not stop, we will meet another node $v_3 \in U \setminus \{v_1,v_2\}$ not reachable from $r$ and so on \ldots. As $|U| \leq n-1$, this process should end with an  Edmonds connected subgraph $B_k$ uncovered. \\
We can then apply a primal-dual Edmonds-style algorithm (with respect to the reduced cost modified by the determination of $U$ and $G_0$ before) 
which repeatedly cover in each iteration an uncovered Edmonds connected subgraph in $G_0$ until every node in $U$ is reachable 
from $r$. By definition of Edmonds connected subgraphs, in the output $r$-branching cover, we can choose only one arc entering the chosen Edmonds connected subgraph and it is enough to 
cover the nodes belonging to $U$ in this subgraph.  
\end{proof}
\subsubsection{Algorithm overview} 
Based on the above observations on $DTC(G)$ and its dual, we  design an algorithm 
which is a composition of 3 phases. Phases I and II determine $G_0$ and a vertex cover $U$ satisfying the conditions stated in Theorem \ref{th1}. 
The details of each phase is as follows:
\begin{itemize}
\item Phase I is of a primal-dual style which tries to cover the sets $S \in \mathcal{F}$ 
such that $|S|=2$. We keep a set $A_0$ of zero reduced cost and the subgraph $G_0$ induced by $A_0$.
$A_0$ is a cover but does not necessarily contain a $r$-branching cover. We determine after this phase a vertex cover (i.e. a node cover) set $U$ of $G$.
Phase I outputs a partial solution $T^1_0$ which is a directed tree rooted in $r$ spanning the nodes in $U$ reachable from $r$ in $G_0$.
It outputs also a dual feasible solution $y$.
 \item Phase II is executed only if $A_0$ does not contain a $r$-branching cover, i.e. there are nodes in $U$ determined in Phase I 
 which are not reachable from $r$ in $G_0$. Phase II works with the reduced costs issued from Phase I and tries to make the nodes in $U$ not reachable from $r$ in $G_0$, either reachable from $r$ in $G_0$, or belong or be connected to an Edmonds connected subgraph in $G_0$.
Phase II transforms this problem to a kind of Set Cover Problem and solve it by a greedy algorithm. Phase II outputs a set of arcs $T^2_0$ and grows the dual solution $y$ issued from Phase I (by growing only the zero value components of $y$). 
\item Phase III is executed only if $T^1_0 \cup T^2_0$ is not a $r$-branching cover. Phase III applies 
a primal-dual Edmonds-style algorithm (with respect to the reduced cost issued from Phases I and II) 
which repeatedly cover in each iteration an uncovered Edmonds connected subgraph in $G_0$ until every node in $U$ is reachable from $r$. 
\end{itemize}
\noindent
\subsection{Initialization}
Set $\mathcal{B}$ to be the collection of the vertex set of all the arcs in $A$ which do not have $r$ as an end vertex. 
In other words, $\mathcal{B}$ contains all the sets of cardinality 2 in $\mathcal{F}$, i.e. $\mathcal{B} = \{S ~ | \mbox{$S \in \mathcal{F}$ and $|S|=2$}\}$. 
Set the dual variable to zero, i.e. $y\leftarrow 0$ and set the reduced cost $\bar{c}$ to $c$, i.e. $\bar{c}\leftarrow c$.
Set $A_0 \leftarrow \{e \in A ~|~ \bar{c}(e)=0\}$. Let $G_0=(V_0,A_0)$ be the subgraph of $G$ induced by $A_0$.\\
During the algorithm, we will keep and update constantly a subset of $T_0 \subseteq A_0$. 
At this stage of initialization, we set $T_0 \leftarrow \emptyset$.\\
 During Phase I, we also keep updating a dual feasible solution $y$ that is initialized at 0 (i.e. all the components of $y$ are equal to 0). The dual solution $y$ is 
not necessary in the construction of a $r$-branching cover but we will need it in the proof for the performance guarantee of the algorithm. 
\subsection{Phase I}
In this phase, we will progressively expand $A_0$ so that it covers all the sets in $\mathcal{B}$. In the mean time, 
during the expansion of $A_0$, we add the vertex set of  newly created strongly connected components of $G_0$ to $\mathcal{B}$.\\
Phase I repeatedly do the followings until $\mathcal{B}$ becomes empty.
\begin{enumerate}
 \item select a set $S \in \mathcal{B}$ which is not covered by $A_0$.
\item select the cheapest (reduced cost) arc(s) in $\delta^-(S)$  
 and add it (them) to $A_0$. $A_0$ covers then $S$.
Let $\alpha$ denote the reduced cost of the cheapest arc(s) chosen above, then we modify the reduced cost of the arcs in $\delta^-(S)$ by subtracting $\alpha$ from them.
Set $y_S \leftarrow \alpha$.
\item  Remove $S$ from $\mathcal{B}$ and if we detect a strongly connected component $K$ in $G_0$ due to the addition of new 
arcs in $A_0$, in the original graph $G$, we add the set $V(K)$ to $\mathcal{B}$.
\end{enumerate} 
\begin{proposition}\label{re:cover}
 After Phase I, $A_0$ is a cover.
\end{proposition}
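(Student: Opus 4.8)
The plan is to combine the monotone growth of $A_0$ with the stopping rule of the phase. Throughout Phase I, arcs are only ever added to $A_0$ (in step 2) and never removed, so covering is monotone: once some arc of $\delta^-(S)$ lies in $A_0$, the set $S$ stays covered until the phase ends. It therefore suffices to show that each set is covered at the instant it leaves $\mathcal{B}$, and then to use that $\mathcal{B}$ is emptied. To link this with the statement, recall that $\mathcal{B}$ is initialized to the vertex sets $\{u,v\}$ of all arcs $(u,v)\in A$ with $r\notin\{u,v\}$, i.e. to the $2$-element members of $\mathcal{F}$; covering all of these is precisely the content of the proposition, an arc incident to $r$ giving no member of $\mathcal{F}$ and being accounted for by the root of the branching.

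First I would set up the loop invariant: every set already removed from $\mathcal{B}$ is covered by the current $A_0$. This holds at the moment a set $S$ is deleted in step 3, since either $S$ was already covered when it was selected, or step 2 has just added to $A_0$ the cheapest arc of $\delta^-(S)$, which covers $S$ by definition. By the monotonicity above, the invariant is then preserved by every subsequent iteration, including those that \emph{enlarge} $\mathcal{B}$ with the vertex set $V(K)$ of a newly detected strongly connected component. Next I would argue termination: each iteration removes one element from $\mathcal{B}$, while the only elements ever inserted are vertex sets of strongly connected components of $G_0$, and since adding arcs can only merge such components, only finitely many distinct ones can occur; moreover feasibility of the instance guarantees that each set reaching step 2 has a non-empty incoming cut and so can indeed be covered. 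Hence after finitely many steps $\mathcal{B}=\emptyset$, at which point every set ever placed in $\mathcal{B}$, in particular every member of the initial family, has been deleted and is therefore covered by the final $A_0$.

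The point demanding the most care is the interplay between this invariant and the dynamic nature of $\mathcal{B}$: one must be sure that inserting the sets $V(K)$ midway cannot turn a previously covered pair back into an uncovered one, and that a pair which becomes covered ``for free'' by an arc added while another set is being processed is still correctly counted as covered. Both are settled by the single fact that $A_0$ never loses an arc, so the monotonicity of covering is the real linchpin of the argument; with it in hand, the remainder is just bookkeeping over the finitely many sets that pass through $\mathcal{B}$.
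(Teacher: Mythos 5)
Your proof is correct and takes essentially the same route as the paper's: Phase I ends only when $\mathcal{B}$ is empty, every set removed from $\mathcal{B}$ was covered by an arc placed in $A_0$ at that moment, and since arcs are never deleted from $A_0$, all the two-element sets of $\mathcal{F}$ (and the added strongly connected components) remain covered at termination. The paper compresses this into two sentences, leaving the monotonicity and termination bookkeeping implicit; your write-up simply makes those details explicit.
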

\begin{proof}
As we can see, Phase I terminates when $\mathcal{B}$ becomes empty. That means the node sets of the arcs, 
which do not have $r$ as an end-node, are all covered by $A_0$. 
Also all the strongly connected components in $G_0$ are covered. $\hfill$ $\Box$
\end{proof}
At this stage, if for any node $v$ there is a path from $r$ to $v$ in $G_0$, we say that $v$ is \textit{reachable} from 
$r$. Set $T_0$ to be a directed tree (rooted in $r$) in $G_0$ spanning the nodes reachable from $r$. $T_0$ is chosen such that for each strongly connected component $K$ added to $\mathcal{B}$ in Phase I, there is exactly one arc in $T_0$ entering $K$, i.e. $|\delta^-(K) \cap T_0|=1$.    
If the nodes reachable from $r$ in $G_0$ form 
a vertex cover, then $T_0$ is a $r$-branching cover and the algorithm stops.  Otherwise, it goes to Phase II.\\
\subsection{Phase II}
Let us consider the nodes which are not reachable from $r$ in $G_0$. 
We divide them into three following categories: 
\begin{itemize}
 \item The nodes $i$ such that $|\delta^-_{G_0}(i)|=0$, i.e. there is no arc in $A_0$ entering $i$. Let us call these nodes \textit{source nodes}.
\item The nodes $i$  such that $|\delta^-_{G_0}(i)|=1$, i.e. there is exactly one arc in $A_0$ entering $i$. Let us call these nodes \textit{sink nodes}.
\item The nodes $i$  such that $|\delta^-_{G_0}(i)|\geq 2$, i.e. there is at least two arcs in $A_0$ entering $i$. Let us call these nodes \textit{critical nodes}.
\end{itemize}
\begin{proposition}
 The set of the source nodes is a stable set.
\end{proposition}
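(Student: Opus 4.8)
The plan is to argue by contradiction, exploiting the fact (Proposition~\ref{re:cover}) that after Phase I the arc set $A_0$ is a cover, together with the definition of a source node as a not-reachable node with no incoming arc in $G_0$.

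First I would suppose, for contradiction, that the source nodes do not form a stable set, so that there exist two source nodes $i$ and $j$ joined by an arc of $A$ (in one direction or the other). Since source nodes are by definition not reachable from $r$, neither of them equals $r$, hence $\{i,j\} \subseteq V \setminus \{r\}$; and because $A(\{i,j\})$ contains the arc joining $i$ and $j$, we have $A(\{i,j\}) \neq \emptyset$, so the set $\{i,j\}$ lies in $\mathcal{F}$ with cardinality $2$, i.e. $\{i,j\} \in \mathcal{B}$.

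Next I would invoke Proposition~\ref{re:cover}: since $A_0$ is a cover it covers every set of $\mathcal{B}$, so in particular $A_0 \cap \delta^-(\{i,j\}) \neq \emptyset$. The key observation is that an arc of $\delta^-(\{i,j\})$ has its head in $\{i,j\}$ and its tail outside $\{i,j\}$, so the internal arc joining $i$ and $j$ does not qualify; any witnessing arc of $A_0$ must therefore enter $i$ or enter $j$ from a node distinct from both. This means $\delta^-_{G_0}(i) \neq \emptyset$ or $\delta^-_{G_0}(j) \neq \emptyset$, directly contradicting that $i$ and $j$ are source nodes, for which $|\delta^-_{G_0}(\cdot)| = 0$. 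Hence no arc joins $i$ and $j$, and the source nodes are pairwise non-adjacent, i.e. they form a stable set.

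I expect the only delicate point to be the bookkeeping about which arcs cover $\{i,j\}$: one must be careful to exclude the internal arc joining $i$ and $j$ from $\delta^-(\{i,j\})$ and to confirm that covering $\{i,j\}$ necessarily creates an incoming arc at one of the two nodes in $G_0$. Everything else follows immediately from the definitions, so the argument is short once this distinction is made explicit.
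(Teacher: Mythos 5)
Your proof is correct and follows essentially the same route as the paper's: both argue by contradiction, using the fact that $A_0$ is a cover after Phase I (Proposition~\ref{re:cover}) to show that an arc between two source nodes would force an incoming $G_0$-arc at one of them, contradicting $|\delta^-_{G_0}(\cdot)|=0$. The only cosmetic difference is that you route the covering requirement through membership of $\{i,j\}$ in $\mathcal{F}$ and $\mathcal{B}$, while the paper invokes the cover property of $A_0$ directly.
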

\begin{proof}
Suppose that the converse is true, then there is an arc $(i,j)$ with $i$, $j$ are both source nodes. As $\delta^-_{G_0}(i)= \delta^-_{G_0}(j)=\emptyset$, we have 
$\delta^-_{G_0}(\{i,j\})=\emptyset$. Hence, $(i,j)$ is not covered by $A_0$. Contradiction.
\end{proof}
\begin{corollary}
 The set $U$ containing the nodes reachable from $r$ in $G_0$ after Phase I, the sink nodes and the critical nodes is a vertex cover (i.e. a node cover) of $G$.
\end{corollary}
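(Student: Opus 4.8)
The plan is to pin down the complement $V \setminus U$ explicitly and then let the preceding proposition do the work. First I would observe that every node of $G$ falls into exactly one of four groups: the nodes reachable from $r$ in $G_0$ (which include $r$ itself), and the three classes of non-reachable nodes, namely the source, sink, and critical nodes. Since by definition $U$ gathers the reachable nodes together with all sink and critical nodes, the only nodes left out of $U$ are precisely the source nodes; that is, $V \setminus U$ equals the set of source nodes.

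The second step is simply to invoke the previous proposition, which states that the source nodes form a stable set. In the language of arcs this says there is no arc $(u,v) \in A$ both of whose end-nodes are source nodes.

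Combining the two observations finishes the argument. Take any arc $(u,v) \in A$. If both $u$ and $v$ lay in $V \setminus U$, then both would be source nodes, producing an arc inside the stable set of source nodes, a contradiction. Hence at least one of $u$, $v$ belongs to $U$, so $U$ meets every arc of $G$ and is therefore a vertex cover.

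The one point deserving a little care, and the step I would verify explicitly, is that $V \setminus U$ really coincides with the source nodes. This rests on the fact that the three classes genuinely partition the non-reachable nodes, since they are defined by the mutually exclusive conditions $|\delta^-_{G_0}(i)|=0$, $|\delta^-_{G_0}(i)|=1$, and $|\delta^-_{G_0}(i)|\geq 2$, and on checking that no reachable node is a source: any node $i \neq r$ reachable from $r$ lies on a directed path from $r$ in $G_0$ whose last arc belongs to $\delta^-_{G_0}(i)$, so $|\delta^-_{G_0}(i)| \geq 1$, while $r$ itself is reachable and hence excluded from the source/sink/critical classification altogether. Once this bookkeeping is settled, the stable-set property of the source nodes supplies the whole conclusion, so I expect no genuine obstacle beyond this verification.
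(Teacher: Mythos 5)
Your proposal is correct and matches the paper's intent exactly: the paper states this as an immediate corollary of the preceding proposition (that the source nodes form a stable set), precisely because $V \setminus U$ consists exactly of the source nodes, so no arc can have both end-nodes outside $U$. Your extra verification that reachable nodes cannot be sources is harmless but not strictly needed, since the paper defines the source/sink/critical classification only on the nodes that are not reachable from $r$, making the partition automatic.
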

\begin{proposition}\label{pro:sink}
 For any sink node $j$, there is at least one critical node $i$ such that $j$ is connected to $i$ in $G_0$. 
 \end{proposition}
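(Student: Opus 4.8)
The plan is to start from the sink node $j$ and trace backwards along incoming arcs inside $G_0$, producing a directed path whose initial endpoint I will argue must be a critical node. Since $j$ is a sink, by definition it has a \emph{unique} incoming arc in $G_0$, say $(k,j)$, so the heart of the proof is understanding what kind of node this predecessor $k$ can be.

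First I would rule out two possibilities for $k$. It cannot be reachable from $r$ in $G_0$: otherwise, prefixing the arc $(k,j)$ with an $r$-to-$k$ path would make $j$ reachable from $r$, contradicting that $j$ is a sink and hence unreachable. It also cannot be a source node: if $\delta^-_{G_0}(k)=\emptyset$, then consider the set $\{k,j\}\in\mathcal{B}$, which Phase~I must cover by Proposition~\ref{re:cover}. But no arc enters $k$ (it is a source), and the only arc entering $j$ is $(k,j)$, which originates inside $\{k,j\}$; hence $\delta^-_{G_0}(\{k,j\})=\emptyset$, contradicting that $\{k,j\}$ is covered. Therefore $k$ is itself a node not reachable from $r$, and it is either critical, in which case I am done with the single arc $(k,j)$ as the required path, or itself a sink, in which case I iterate the argument on $k$.

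The remaining point is termination of this backward tracing. Following unique predecessors produces a sequence $j=j_0,j_1,j_2,\dots$ joined by arcs $(j_{t+1},j_t)\in A_0$, each node being a sink as long as no critical node is met. Since $V$ is finite, if a critical node were never reached the sequence would repeat a vertex and thereby expose a directed cycle $C$ consisting entirely of sink nodes. Here is where I expect the main obstacle to lie: I must convert this cycle into a covering contradiction. Each node of $C$ is a sink whose unique incoming arc already lies inside $C$, so no arc of $G_0$ enters $C$ from outside, i.e. $\delta^-_{G_0}(C)=\emptyset$. This in turn forces $C$ to be a maximal strongly connected component of $G_0$, since nothing outside $C$ can reach it and so $C$ cannot be absorbed into a larger strongly connected set. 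Such a component is added to $\mathcal{B}$ during Phase~I and is therefore covered by the end of the phase, giving $\delta^-_{G_0}(C)\neq\emptyset$, a contradiction.

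Hence the tracing cannot cycle and must halt at a critical node $i$. Reading the traced arcs forward then yields a directed path from $i$ to $j$ in $G_0$, which is exactly the assertion that $j$ is connected to $i$, completing the proof. The delicate steps to write carefully are the source-exclusion argument (which relies on covering the cardinality-$2$ set $\{k,j\}$) and the maximality observation that upgrades the sink-cycle $C$ into a genuine strongly connected component whose covering yields the contradiction.
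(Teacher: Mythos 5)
Your proof is correct and follows essentially the same route as the paper's: trace backwards from the sink along unique incoming arcs in $G_0$, use the covering property of $A_0$ (Proposition~\ref{re:cover}) to rule out a source predecessor, and rule out cycling by noting that a cycle of sinks would be a strongly connected component that Phase~I must have added to $\mathcal{B}$ and covered. Your write-up is in fact slightly more careful than the paper's on two points it leaves implicit --- excluding a predecessor reachable from $r$, and explaining why the sink-cycle is itself a maximal strongly connected component --- but the underlying argument is identical.
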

\begin{proof}
 Let the unique arc in $\delta^-_{G_0}(j)$ be $(i_1,j)$. Since this arc should be covered by $A_0$, $\delta^-_{G_0}(i_1) \neq \emptyset$. 
If  $|\delta^-_{G_0}(i_1)| \geq 2$ then $i_1$ is a critical node and we have done. Otherwise, i.e. $|\delta^-_{G_0}(i_1)|=1$ and 
$i_1$ is a sink node. Let $(i_2,i_1)$ be the unique arc in $\delta^-_{G_0}(i_1)$, we repeat then the same reasoning for $(i_2,i_1)$ and 
for $i_2$. If this process does not end with a critical node, it should meet each time a new sink node not visited before (It is not possible that a 
directed cycle is created since then this directed cycle (strongly connected component) should be covered in Phase I and hence at least 
one of the nodes on the cycle has two arcs entering it, and is therefore critical). As the number of sink nodes is at most $n-1$,  the process can not continue infinitely and should end at a stage $k$ ($k<n$) with $i_k$ is a critical node. By construction, the path $i_k,i_{k-1},\ldots,i_1,j$ is a path in $G_0$ from $i_k$ to $j$.
\end{proof}
A critical node $v$ is said to be \textit{covered} if there is at least one arc $(w,v) \in A_0$ such that $w$ is not a source node, i.e. $w$ can be  
a sink node or a critical node or a node reachable from $r$. Otherwise, we say $v$ is \textit{uncovered}.
\begin{proposition}\label{pro:critical}
 If all critical nodes are covered then for any critical node $v$, one of the followings is verified:
\begin{itemize}
 \item either $v$ belongs to an Edmonds connected subgraph of $G_0$ or $v$ is connected to an Edmonds connected subgraph of $G_0$,
\item  there is a path from $r$ to $v$ in $G_0$, i.e. $v$ is reachable from $r$ in $G_0$.
\end{itemize}
\end{proposition}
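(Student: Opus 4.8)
The plan is to fix a critical node $v$, dispose of the easy configurations, and then attack the genuinely delicate case by tracing backwards along $G_0$. If $v$ is reachable from $r$ we are already in the second alternative, so I would assume henceforth that $v$ is not reachable and aim to show that $v$ lies in, or is connected to, an Edmonds connected subgraph. A useful first observation is that a critical node can have no reachable predecessor: if $(w,v)\in A_0$ with $w$ reachable, then prolonging a directed path from $r$ to $w$ by the arc $(w,v)$ would make $v$ reachable, a contradiction. Thus every predecessor of $v$ in $G_0$ is a source, sink, or critical node, and since $v$ is \emph{covered}, at least one predecessor is non-source, hence a sink or a critical node.

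I would then split on the strongly connected component $K$ of $v$ in $G_0$. If $K$ is non-trivial, it is strongly connected, contains $v\in U$ (critical nodes lie in $U$), and avoids $r$ (otherwise $r$ would reach $v$); by the remark that any strongly connected subgraph meeting $U$ and not containing $r$ is an Edmonds connected subgraph, $K$ is such a subgraph and $v\in K$, giving the first alternative. The remaining and principal case is when $\{v\}$ is a trivial component, where $K$ is of no help. Here I would build a backward chain of critical nodes $v=v_0,v_1,v_2,\dots$, each equipped with a directed path from $v_{j+1}$ to $v_j$ in $G_0$. Starting from a covered trivial-component critical node $v_j$, pick a non-source predecessor $w$: if $w$ is critical, set $v_{j+1}:=w$ and use the arc $(w,v_j)$; if $w$ is a sink, Proposition~\ref{pro:sink} yields a critical node $v_{j+1}$ together with a directed path from $v_{j+1}$ to $w$, which I extend by $(w,v_j)$. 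Either way $v_{j+1}$ is critical, and $v_{j+1}\neq v_j$ because a directed path from $v_j$ to itself would place $v_j$ on a cycle, contradicting triviality of its component.

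The crux, and the step I expect to give the most trouble, is showing that this chain terminates correctly instead of looping. I would prove that the $v_j$ are pairwise distinct: concatenating the paths gives, for every $k<j$, a directed walk from $v_j$ to $v_k$, so a repetition $v_j=v_k$ would force a directed cycle through $v_k$ and make its component non-trivial, which is impossible since we only extend the chain from trivial-component nodes. Distinctness together with the finiteness of the critical-node set forces the process to stop. By the opening observation no critical node has a reachable predecessor, so the chain never enters the reachable region; moreover at every trivial-component node the \emph{covered} hypothesis lets us continue, so the only possible reason to halt is that some $v_j$ belongs to a non-trivial component $K$. As above, $K$ is an Edmonds connected subgraph (it meets $U$ at $v_j$ and avoids $r$), and the concatenated directed path from $v_j$ to $v_0=v$ witnesses that $v$ is connected to $K$. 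This yields the first alternative and finishes the argument.
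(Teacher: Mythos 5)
Your proof is correct and follows essentially the same route as the paper's: trace backwards from $v$ through covered critical nodes using Proposition~\ref{pro:sink}, and conclude either that $v$ is reachable from $r$ or that the trace runs into a directed cycle (a non-trivial strongly connected piece), which is an Edmonds connected subgraph containing or connected to $v$. The only difference is bookkeeping --- the paper detects the cycle as a ``critical node visited before'' while you rule out revisits by checking triviality of strongly connected components and halt upon meeting a non-trivial one --- and your version is in fact a somewhat more rigorous write-up of the paper's sketch.
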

\begin{proof}
If $v$ is covered by a node reachable from $r$, we have done. Otherwise, $v$ is covered by sink node or by another critical node. From Proposition 
\ref{pro:sink} we derive that in the both cases, 
$v$ will be connected to a critical node $w$, i.e. there is a path from $w$ to $v$ in $G_0$. Continue this reasoning with $w$ and so on, we should end with a node reachable from $r$ or a critical node visited before. In the first case $v$ is reachable from $r$. In the second case, $v$ belongs to a directed cycle in $G_0$ if we have revisited $v$, otherwise $v$ is connected to a directed cycle in $G_0$. The directed cycle in the both cases is an Edmonds connected subgraph (because it is strongly connected) and it can be included in a greater Edmonds connected subgraph.
\end{proof}
\begin{lemma}\label{lem:critical}
 If all critical nodes are covered then for any node $v \in U$ not reachable from $r$ in $G_0$,
\begin{itemize}
 \item either $v$ belongs to an Edmonds connected subgraph of $G_0$,
\item or $v$ is connected to an Edmonds connected subgraph of $G_0$.
\end{itemize}
\end{lemma}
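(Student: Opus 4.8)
The plan is to exploit the fact that, by the very definition of $U$ in the preceding corollary, the vertex cover $U$ is built from exactly three kinds of nodes: those reachable from $r$ in $G_0$, the sink nodes, and the critical nodes. Hence a node $v \in U$ that is not reachable from $r$ must be either a sink node or a critical node, and I would simply treat these two cases separately, in each case reducing to the already-established Propositions~\ref{pro:sink} and~\ref{pro:critical}.

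First, suppose $v$ is a critical node. Since we are assuming that all critical nodes are covered, Proposition~\ref{pro:critical} applies to $v$ and leaves exactly two alternatives: either $v$ belongs to or is connected to an Edmonds connected subgraph of $G_0$, or $v$ is reachable from $r$ in $G_0$. The second alternative is excluded by hypothesis, so the first must hold, which is precisely the conclusion of the lemma in this case.

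Next, suppose $v$ is a sink node. By Proposition~\ref{pro:sink} there is a critical node $i$ to which $v$ is connected, i.e.\ there is a directed path from $i$ to $v$ in $G_0$. Applying Proposition~\ref{pro:critical} to $i$ (again using that all critical nodes are covered) leaves the same two alternatives for $i$. The crucial step here is to discard the possibility that $i$ is reachable from $r$: if it were, then concatenating a path from $r$ to $i$ with the path from $i$ to $v$ would make $v$ reachable from $r$, contradicting our assumption. Therefore $i$ belongs to, or is connected to, some Edmonds connected subgraph $B$. In either case there is a directed path in $G_0$ ending at $i$ whose starting node lies in $B$ (for the ``belongs to'' case this is the trivial path at $i \in B$); prepending it to the path from $i$ to $v$ yields a path from a node of $B$ to $v$, so $v$ is connected to $B$. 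This establishes the second bullet of the lemma and completes the argument.

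The main obstacle, such as it is, lies entirely in the sink-node case: one must invoke transitivity of reachability (composition of directed paths in $G_0$) correctly, both to rule out the ``reachable from $r$'' branch of Proposition~\ref{pro:critical} for $i$ and, in the remaining branch, to splice the path from $i$ to $v$ onto the witnessing path for $i$ without losing the property of starting inside $B$. Everything else is a direct case analysis on the three admissible types of a node of $U$.
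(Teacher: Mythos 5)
Your proof is correct and follows exactly the route the paper intends: the paper's own proof is the one-line remark that the lemma is a direct consequence of Propositions~\ref{pro:sink} and~\ref{pro:critical}, and your case analysis (critical node versus sink node, ruling out reachability from $r$ by hypothesis, and splicing paths via transitivity under the paper's definition of ``connected to'') is precisely the argument being compressed there. No gaps; you have simply made explicit what the paper leaves implicit.
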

\begin{proof}
 The lemma is a direct consequence of Propositions \ref{pro:sink} and \ref{pro:critical}.
\end{proof}
The aim of Phase II is to cover all the uncovered critical nodes. Let us see how to convert this problem into a weighted SCP and to solve the latter by adapting the well-known greedy algorithm for weighted SCP.\\
A source node $s$ is  
\textit{zero connecting} a critical node $v$ (reciprocally $v$ is  \textit{zero connected from} $s$) if $(s,v) \in A_0$. 
If $(s,v) \notin A_0$ but $(s,v) \in A$ then $s$ is 
\textit{positively connecting}  $v$ (reciprocally $v$ is \textit{positively connected from} $s$).\\
Suppose that at the end of Phase I, there are $k$  uncovered critical nodes $v_1$, $v_2$, \ldots $v_k$ and $p$ source nodes $s_1$, $s_2$ , \ldots $s_p$. 
Let $S=\{s_1,s_2, \ldots,s_p\}$ denote the set of the source nodes.
\begin{remark}\label{cover_critical}
An uncovered critical node $v$ can be only covered:
\begin{itemize}
\item by directly an arc from a sink node or another crtitical node to $v$,
\item or via a source node $s$ connecting (zero or positively) $v$, i.e. by two arcs: an arc in $\delta^-(s)$ and the arc $(s,v)$. 
\end{itemize}
\end{remark}
Remark \ref{cover_critical} suggests us that we can consider every critical node $v$ as a ground element to be covered in a Set Cover instance and 
the subsets containing $v$ could be the singleton $\{v\}$ and any subset containing $v$ of the set of the critical nodes connecting (positive or zero) from $s$.    
The cost of the the singleton $\{v\}$ is the minimum reduced cost of the arcs from a sink node or another crtitical node to $v$. 
The cost of a subset $T$ containing $v$ of the set of the  critical nodes connecting from $s$ is the minimum reduced cost of the arcs in $\delta^-(s)$ plus 
the sum of the reduced cost of the arcs $(s,w)$ for all $w \in T$.\\ 
Precisely, in Phase II, we proceed to cover all the uncovered critical nodes by solving by the greedy algorithm the following instance of the Set Cover Problem:
\begin{itemize}
 \item The ground set contains $k$ elements which are the critical nodes $v_1$, $v_2$, \ldots, $v_k$.
\item The subsets are
\begin{description}
 \item[Type I]  For each source node $s_i$ for $i=1,\ldots,p$, let $\mathcal{C}(s_i)$ be the set of all the critical nodes 
connected (positively or zero) from $s_i$. The subsets of Type I associated to $s_i$ are the subsets of $\mathcal{C}(s_i)$ ($\mathcal{C}(s_i)$ 
included). To define their cost, we define 
$$\bar{c}(s_i)= \left\{ \begin{array}{ll}
\min\{\bar{c}(e) ~| ~ \mbox{$e \in \delta^-(s_i)$} \} & \mbox{ if $\delta^-(s_i) \neq \emptyset$,}\\
+\infty & \mbox{ otherwise}
\end{array}
\right.
$$
Let us choose an arc $e_{s_i}= \mathrm{argmin}\{\bar{c}(e) ~| ~ \mbox{$e \in \delta^-(s_i)$} \}$ which denotes an arc entering $s_i$ of minimum reduced cost. Let $T$ be any subset of type I associated to $s_i$, we define $\bar{c}(T)$ the cost of $T$ 
as $\displaystyle \bar{c}(T)=\bar{c}(s_i)+\sum_{v \in T}\bar{c}(s_i,v)$.   
Let us call the arc subset containing the arc $e_{s_i}$ and the arcs $(s_i,v)$ for all $v \in T$ uncovered, \textit{the covering arc subset of $T$}.
 \item[Type II] the singletons $\{v_1\}$, $\{v_2\}$, \ldots, $\{v_k\}$. We define the cost of the singleton $\{v_i\}$, 
 $$\bar{c}(v_i) = 
\left\{ 
\begin{array}{ll}
\min\{\bar{c}(w,v_i) ~|~ \mbox{where $w$ is not a source node, i.e. $w \in V \setminus S$}\} & \mbox{ if $(V \setminus S : \{v_i\}) \neq \emptyset$, }\\
+\infty & \mbox{otherwise}
\end{array}
\right.$$
\end{description}
Let us choose an arc $e_{v_i}=\mathrm{argmin}\{\bar{c}(w,v_i) ~|~ \mbox{where $w$ is not a source node, i.e. $w \in V \setminus S$}\}$, denotes an arc entering $v_i$ from a non source node of minimum reduced cost. Let the singleton $\{e_{v_i}\}$ be \textit{the covering arc subset of $\{v_i\}$}.
\end{itemize}
\noindent
We will show that we can adapt the greedy algorithm solving this set cover problem to our primal-dual scheme. In particular, we will specify how to update dual variables et the sets 
$A_0$ and $T_0$ in each iteration of the greedy algorithm.
The sketch of the algorithm is explained in Algorithm 1.
\begin{algorithm}\label{greedy}
\While{ there exist uncovered critical nodes}{
Compute the most efficient subset $\Delta$ \;
Update the dual variables and the sets $A_0$ and $T_0$\;
Change the status of the uncovered critical nodes in $\Delta$ to covered \;
}
\caption{Greedy algorithm for Phase II}
\end{algorithm}
\noindent
Note that in Phase II, contrary to Phase I, the reduced costs $\bar{c}$ are not to be modified and all the computations are based on 
the reduced costs $\bar{c}$ issued from Phase I. In the sequel, we will specify how to compute the most efficient subset $\Delta$ 
and update the dual variables.\\
For $1 \leq i \leq p$ let us call $\mathcal{S}_i$  the collection of all the subsets 
of type I associated to $s_i$. 
Let $\mathcal{S}$ be the collection of all the subsets of type I and II.\\
\textbf{Computing the most efficient subset. }
Given a source node $s_i$, while the number of subsets in $\mathcal{S}_i$ can be exponential, we will show 
in the following that computing the most efficient subset in $\mathcal{S}_i$ is can be done in polynomial time. 
Let us suppose that there are $i_q$ critical nodes denoted by $v^{i_1}_{s_i}, v^{i_2}_{s_i}, \ldots, v^{i_q}_{s_i}$ which are connected (positively or zero) from $s_i$. 
In addition, we suppose without loss of generality that $\bar{c}(s_i,v^{i_1}_{s_i}) \leq \bar{c}(s_i,v^{i_2}_{s_i}) \leq \ldots \leq \bar{c}(s_i,v^{i_q}_{s_i})$.
We  compute $f_i$ and $S_i$ which denote respectively the best efficiency and the most effecicient set  
in $\mathcal{S}_i$ by the following algorithm.
\begin{description}
 \item [Step 1 ] Suppose that $v^{i_h}_{s_i}$ is the first uncovered critical node met when we scan the critical nodes $v^{i_1}_{s_i}, v^{i_2}_{s_i}, \ldots, v^{i_q}_{s_i}$ in this order.\\    
Set $S_{i} \leftarrow \{v^{i_h}_{s_i}\}$. Set $\bar{c}(S_i) \leftarrow \bar{c}(s_i)+\bar{c}(s_i,v^{i_h}_{s_i})$.\\ 
Set $d_i \leftarrow 1$. Set $f_i \leftarrow \frac{\bar{c}(S_i)}{d_i}$ and $\Delta_i \leftarrow S_i$.
\item [Step 2] We add progressively uncovered critical nodes $v^{i_j}_{s_i}$ for $j=h+1,\ldots, i_q$ to $S_i$ while 
this allows to increase the efficiency of $S_i$:\\
For $j=h+1$ to $i_q$, if $v^{i_j}_{s_i}$ is uncovered and $f_{i}>\frac{\bar{c}(S_i)+\bar{c}(s_i,v^{i_j})}{d_i+1}$ then $f_{i} \leftarrow \frac{\bar{c}(S_i)+\bar{c}(s_i,v^{i_j}_{s_i}}{d_i+1}$, 
$d_i \leftarrow d_i+1$ and $S_i \leftarrow S_i \cup \{v^{i_j}_{s_i}\}$.
\end{description}
Set $i_{min} \leftarrow \mathrm{argmin}\{f_i ~ | \mbox{$s_i$ is a source node}\}$. \\
Choose the most efficient subset among $S_{i_{min}}$ and the singletons of type II for which the computation of efficiency is 
straightforward. 
Set $\Delta$ to be most efficient subset and set  $d \leftarrow |\Delta|$ the number of the uncovered critical nodes in $\Delta$.\\
\textbf{Updating the dual variables and the sets $A_0$ and $T_0$}\\
Let $g = \max\{|T| ~ | ~ T \in \mathcal{S}\}$ and let $H_g=1+\frac{1}{2}+\frac{1}{3}+\ldots+\frac{1}{g}$.  
\begin{remark}\label{g}
 $g \leq D^+_r$.
\end{remark}
Given a critical node $v$, let $p_v$ denote the number of source nodes connecting $v$. 
Let $s^v_1$, $s^v_2$, \ldots, $s^v_{p_v}$ be these source nodes such that $\bar{c}(s^v_1,v) \leq \bar{c}(s^v_2,v)\leq \ldots \leq \bar{c}(s^v_{p_v},v)$. 
We define $S^j_v=  \{v,s^1_v, \ldots, s^j_v\}$ for $j=1, \ldots, p_v$.  
We can see that for $j=1, \ldots,p_v$, $S^j_v \in \mathcal{F}$. Let $y_{S^j_v}$ be the dual variable associated to the cut constraints 
$x(\delta^-(S^j_v)) \geq 1$. 
The dual variables will be updated as follows.
For each critical node $v$ uncovered in $\Delta$, we update the value of $y_{S^j_v}$ for $j=1, \ldots, p_v$ 
for that $\sum_{j=1}^{p_v}y_{S^j_v}=\frac{\bar{c}(\Delta)}{H_g\times d}$. This updating process 
saturates progressively the arcs $(s^j_v,v)$ for $j=1, \ldots, p_v$. Details are given in Algorithm \ref{update}.
\begin{algorithm}  \label{update}
$j \leftarrow 1$ \;
\While{ ($j<p_v$) and ($\bar{c}(s^{j+1}_v,v) < \frac{\bar{c}(\Delta)}{H_g\times d}$)}{
$y_{S^j_v} \leftarrow \bar{c}(s^{j+1}_v,v)-\bar{c}(s^{j}_v,v)$\;
$j \leftarrow j+1$ \;
}
\If{$\bar{c}(s^{p_v}_v,v)<\frac{\bar{c}(\Delta)}{H_g\times d}$}{
	 $y_{S^{p_v}_v} \leftarrow \frac{\bar{c}(\Delta)}{H_g\times d} - \bar{c}(s^{p_v}_v,v)$\;
}
\caption{Updating the dual variables}
\end{algorithm}
We add to $A_0$ and to $T_0$ the arcs in the covering arc subset of $\Delta$. \\
Let us define $\mathcal{T}$ as the set of the subsets $T$ such that $y_T$ is made positive in Phase II.
\begin{lemma}\label{phase2_valid}
 The dual variables which were made positive in Phase II respect the reduced cost issued from Phase I.
\end{lemma}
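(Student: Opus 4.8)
The plan is to check dual feasibility one arc at a time. Since the reduced cost left after Phase~I is $\bar c(e)=c(e)-\sum_{S:\,e\in\delta^-(S)}y_S$ over the sets $S$ raised in Phase~I, the claim is equivalent to showing that the extra dual mass created in Phase~II stays within the residual room of every arc, i.e. that for all $e\in A$
$$\sum_{T\in\mathcal{T},\; e\in\delta^-(T)}y_T\ \le\ \bar c(e).$$
First I would note that the only sets carrying positive dual in Phase~II are the nested sets $S^j_v=\{v,s^1_v,\dots,s^j_v\}$, so $e=(w,x)$ can enter such a set only when its head $x$ is a critical node or one of the source nodes sitting inside some $S^j_v$. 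Any arc whose head is of neither type receives no Phase~II dual, so the problem reduces to two families: arcs entering a critical node and arcs entering a source node.

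For an arc $e=(w,v)$ whose head is a critical node $v$, I would read the cumulative dual straight off Algorithm~\ref{update}, whose updates make the partial sums of the $y_{S^j_v}$ rise like a level from $\bar c(s^1_v,v)$ up to the cap $\frac{\bar c(\Delta)}{H_g\,d}$, with breakpoints at the $\bar c(s^j_v,v)$. If $w=s^{j_0}_v$ is a source connecting $v$, then $e\in\delta^-(S^j_v)$ precisely for $j<j_0$, and telescoping gives $\sum_{j<j_0}y_{S^j_v}\le\bar c(s^{j_0}_v,v)=\bar c(e)$. If instead $w$ is not a source connecting $v$, then $w\notin S^j_v$ for every $j$, so $e$ lies in all the cuts $\delta^-(S^j_v)$ and absorbs the whole level $\le\frac{\bar c(\Delta)}{H_g\,d}$; here I would use that $\Delta$ is the most efficient set and that the Type~II singleton $\{v\}$, of cost $\bar c(v)=\min_{w'\in V\setminus S}\bar c(w',v)$, was an available candidate, which yields $\frac{\bar c(\Delta)}{H_g\,d}\le\frac{\bar c(\Delta)}{d}\le\bar c(v)\le\bar c(e)$.

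The case I expect to be the real obstacle is an arc $e=(w,s)$ entering a source node $s$. The difficulty is that raising $y_{S^j_v}$ for any critical node $v$ connected from $s$ deposits mass on the single cheap arc into $s$, so this one arc is charged simultaneously by as many as $g$ different critical-node families and a term-by-term bound overshoots. Using the level description above, the mass the family of $v$ puts on $e$ is at most $[\frac{\bar c(\Delta_v)}{H_g\,d_v}-\bar c(s,v)]^+$ (positive part), where $\Delta_v$, of size $d_v$, is the set chosen in the iteration covering $v$, so the goal becomes
$$\sum_{v:\,s\text{ connects }v}\Big[\frac{\bar c(\Delta_v)}{H_g\,d_v}-\bar c(s,v)\Big]^+\ \le\ \bar c(s)\ \le\ \bar c(w,s).$$
I would order the contributing nodes $u_1,\dots,u_m$ by the iteration in which they become covered; just before $u_k$ is covered the whole set $\{u_k,\dots,u_m\}$ is still uncovered and is an available Type~I subset of $s$, so the most-efficient-set property gives
$$\frac{\bar c(\Delta_k)}{d_k}\ \le\ \frac{\bar c(s)+\sum_{j\ge k}\bar c(s,u_j)}{m-k+1}.$$
Substituting and summing over $k$, the coefficient of $\bar c(s)$ collapses to $\frac{H_m}{H_g}\le1$ while each out-arc term $\bar c(s,u_j)$ picks up the nonpositive coefficient $\frac{H_m-H_{m-j}}{H_g}-1\le0$; this is exactly where the harmonic normalisation $H_g$ does its work, and it is the only place where separating the shared in-arc cost $\bar c(s)$ from the individual out-arc costs $\bar c(s,u_j)$ is genuinely delicate.
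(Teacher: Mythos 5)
Your proof is correct and follows essentially the same route as the paper's: the same case analysis (arcs from a source into a critical node handled by the telescoping structure of the dual-update algorithm, arcs from a non-source into a critical node handled by greedy optimality against the Type II singleton $\{v\}$, and arcs entering a source node handled by the harmonic-sum greedy analysis with the comparison sets drawn from the Type I subsets of $s$). The only difference is bookkeeping in the last case: you compare each chosen set $\Delta_k$ against the tail set $\{u_k,\dots,u_m\}$ and track explicit coefficients $\tfrac{H_m}{H_g}$ and $\tfrac{H_m-H_{m-j}}{H_g}-1$, whereas the paper compares against the full set $T_u$ and telescopes $\sum_k \tfrac{a_k-a_{k+1}}{a_k}\le H_{|T_u|}$; both are the standard Chv\'atal-type accounting and give the same conclusion.
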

\begin{proof}
For every $T \in \mathcal{T}$, the arcs in $\delta^-(T)$ can only be 
either an arc in $\delta^-(s_i)$ with $s_i$ is a source node or an arc in $\delta^-(v)$ with 
$v$ is a critical node. Hence, we should show that for every arc $(u',u)$ with $u$ is either a critical node or a source node, we have 
$$\displaystyle \sum_{T \in \mathcal{T} \mbox{ s.t. } u \in T}y_T \leq \bar{c}(u',u)$$ 
\begin{itemize}
 \item $u$ is a critical node $v$ and $u'$ is the source node $s^v_j$. The possible subsets $T \in \mathcal{T}$ such that $(s^v_j,v) \in \delta^-(T)$ are the 
sets $S^1_v$, \ldots, $S^{j-1}_v$. By Algorithm \ref{update}, we can see that 
$$\displaystyle \sum_{k=1}^{j-1} y_{S_v^{k}} \leq \bar{c}(s^v_j, v).$$
\item $u$ is a critical node $v$ and $u' \in V \setminus S$. 
By definition of $\bar{c}(v)$, we have $\bar{c}(u',u) \geq \bar{c}(v)$. By analogy with the Set Cover problem, 
the dual variables made positive in Phase II respect the cost of the singleton $\{v\}$. Hence 
$$\displaystyle \sum_{T \in \mathcal{T} \mbox{ s.t. } v \in T}y_T \leq \bar{c}(v) \leq \bar{c}(u',u)$$ 
\item $u$ is source node and $u' \in V \setminus S$. 
For each critical node $w$ such that $(u,w) \in A$, we suppose that 
$u=s^{i(u,w)}_w$ where $1 \leq i(u,w) \leq p_w$. Let 
$$T_u=\{ w ~ | ~ \mbox{$w$ is a critical node,  $(u,w) \in A$ and $y_{S^{i(u,w)}_w}>0$} \}$$
We can see that $T_u \in \mathcal{S}$ and $\bar{c}(T_u) = \bar{c}(u) + \sum_{w \in T_u}\bar{c}(u,w)$.
Suppose that $l$ is the total number of iterations in Phase II. 
We should show that
\begin{equation}\label{dual1}
 \sum_{k=1}^l\sum_{w \in T_u \cap \Delta_k}(\frac{\bar{c}(\Delta_k)}{H_g \times d_k}-\bar{c}(u,w)) \leq \bar{c}(u)
\end{equation}
where $\Delta_k$ is the subset which has been chosen in $k^{th}$ iteration.
Let $a_k$ be the number of uncovered critical nodes in $T_u$ at the beginning of the $k^{th}$ iteration. We have then $a_1=|T_u|$ and 
$a_{l+1}=0$. 
Let $A_k$ be the set of previously uncovered critical nodes of $T_u$ covered in the $k^{th}$ iteration. We immediately find that 
$|A_k|=a_k -a_{k+1}$. By Algorithm \ref{greedy}, we can see that at the $k^{th}$ iteration
$\frac{\bar{c}(\Delta_k)}{H_g \times d_k} \leq \frac{\bar{c}(T_u)}{H_g \times a_k}$. Since $|A_k|=a_k -a_{k+1}$ then 
$$\sum_{w \in T_u \cap \Delta_k}(\frac{\bar{c}(\Delta_k)}{H_g \times d_k})-\sum_{w \in T_u \cap \Delta_k}\bar{c}(u,w)) \leq \frac{\bar{c}(T_u)}{H_g} \times \frac{a_k-a_{k+1}}{a_k}-\sum_{w \in T_u \cap \Delta_k}\bar{c}(u,w)) $$
Hence, 
\begin{eqnarray*}
 \sum_{k=1}^l\sum_{w \in T_u \cap \Delta_k}(\frac{\bar{c}(\Delta_k)}{H_g \times d_k}-\bar{c}(u,w))  & \leq & \frac{\bar{c}(T_u)}{H_g}\sum_{k=1}^l \frac{a_k-a_{k+1}}{a_k} - \sum_{k=1}^l\sum_{w \in T_u \cap \Delta_k}\bar{c}(u,w)) \\
 & \leq & \frac{\bar{c}(T_u)}{H_g}\sum_{k=1}^l(\frac{1}{a_k} + \frac{1}{a_{k}-1} + \ldots + \frac{1}{a_{k+1}-1}) - \sum_{k=1}^l\sum_{w \in T_u \cap \Delta_k}\bar{c}(u,w))\\
& \leq & \frac{\bar{c}(T_u)}{H_g}\sum_{i=1}^{a_1}\frac{1}{i} - \sum_{k=1}^l\sum_{w \in T_u \cap \Delta_k}\bar{c}(u,w)) \\
&\leq  & \bar{c}(T_u)- \sum_{k=1}^l\sum_{w \in T_u \cap \Delta_k}\bar{c}(u,w)) = \bar{c}(u).
\end{eqnarray*}
 \end{itemize}
\end{proof}
Let $T^2_0 \subset T_0$ the set of the arcs added to $T_0$ in Phase II. For each $e \in T^2_0$, 
let $c_2(e)$ be the part of the cost $c(e)$ used in Phase II. 
\begin{theorem}\label{perf2}
 $$\displaystyle c_2(T_0)=\sum_{e \in T^2_0}c_2(e) \leq H_g \sum_{T \in \mathcal{T}}y_T \leq \ln(D^+_r)\sum_{T \in \mathcal{T}}y_T$$
\end{theorem}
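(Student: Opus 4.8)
The plan is to prove the two inequalities separately; the first is the substantial one and is essentially the primal--dual recasting of the classical $H_g$ guarantee of greedy set cover, carried out by an iteration-by-iteration accounting. Write $\Delta_1,\ldots,\Delta_l$ for the subsets selected in the $l$ iterations of Algorithm \ref{greedy} and $d_k=|\Delta_k|$ for the number of critical nodes freshly covered in iteration $k$. I would introduce the ``greedy charge''
$$\Phi = \sum_{k=1}^{l}\bar{c}(\Delta_k),$$
observe that in iteration $k$ the arcs appended to $T^2_0$ are exactly the covering arc subset of $\Delta_k$ (of Phase II cost $\bar{c}(\Delta_k)$), and then argue that
$$\sum_{e\in T^2_0}c_2(e)\ \le\ \Phi.$$
The point to stress here is that the inequality, rather than equality, is genuine: when a single source node $s_i$ supplies the most efficient set in several distinct iterations, its entering arc $e_{s_i}$ is physically added to $T^2_0$ only once and so contributes $\bar c(s_i)$ to $\sum_e c_2(e)$ once, whereas $\Phi$ charges $\bar c(s_i)$ in each such iteration. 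This reuse of source-entering arcs is precisely what produces the slack.

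Next I would show $\Phi = H_g\sum_{T\in\mathcal{T}}y_T$ by tracking the dual growth per iteration. For each of the $d_k$ freshly covered nodes $v\in\Delta_k$, Algorithm \ref{update} raises the variables $y_{S^1_v},\ldots,y_{S^{p_v}_v}$ (all starting at $0$, and touched only in the single iteration in which $v$ is covered) so that their sum reaches $\tfrac{\bar{c}(\Delta_k)}{H_g\,d_k}$. Summing over the $d_k$ such nodes, the dual objective increases by $d_k\cdot\tfrac{\bar{c}(\Delta_k)}{H_g\,d_k}=\tfrac{\bar{c}(\Delta_k)}{H_g}$ in iteration $k$; since the sets $S^j_v$ are pairwise distinct across nodes and iterations, summing over all $l$ iterations yields $\sum_{T\in\mathcal{T}}y_T=\tfrac{1}{H_g}\Phi$. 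Chaining this with the previous paragraph gives the first inequality $\sum_{e\in T^2_0}c_2(e)\le\Phi=H_g\sum_{T\in\mathcal{T}}y_T$, and Lemma \ref{phase2_valid} certifies that the $y$ so constructed respects the reduced costs, so the bound is meaningful.

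The second inequality is immediate: by Remark \ref{g} we have $g\le D^+_r$, hence $H_g\le H_{D^+_r}$, and the standard estimate of the harmonic number by the logarithm gives $H_g\le\ln(D^+_r)$; multiplying the nonnegative quantity $\sum_{T\in\mathcal{T}}y_T$ through completes the chain.

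The hard part will be the per-node dual accounting of the middle paragraph. I would have to verify carefully that the saturation loop of Algorithm \ref{update} actually drives $\sum_{j}y_{S^j_v}$ up to the intended target $\tfrac{\bar{c}(\Delta_k)}{H_g\,d_k}$ for every freshly covered $v$ (rather than terminating prematurely on the arcs it can saturate), and that while doing so it never exceeds any reduced-cost constraint, which is exactly the content that Lemma \ref{phase2_valid} must supply. The secondary, but still delicate, bookkeeping obstacle is the one flagged above: making the passage from $\sum_e c_2(e)$ to $\Phi$ rigorous requires pinning down the meaning of ``the part of $c(e)$ used in Phase II'' for reused source-entering arcs, so that each such arc is counted exactly once on the left while $\Phi$ is allowed to overcount it, yielding the inequality in the correct direction.
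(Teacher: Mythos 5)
Your proposal is correct and follows essentially the same argument as the paper: the same per-iteration accounting in which iteration $k$ adds the covering arc subset of $\Delta_k$, of reduced cost $\bar{c}(\Delta_k)$, to $T^2_0$ while the dual objective grows by exactly $\bar{c}(\Delta_k)/H_g$ through the variables $y_{S^j_v}$ of the freshly covered nodes, followed by Remark \ref{g} to pass from $H_g$ to $\ln(D^+_r)$. The one point where you deviate is in fact a refinement: the paper asserts the equality $c_2(T_0)=H_g\sum_{T \in \mathcal{T}}y_T$ by summing $c_2(T^{2_k}_0)=\bar{c}(\Delta_k)$ over iterations, which silently double-counts a source-entering arc $e_{s_i}$ when the same source $s_i$ supplies the chosen subset in several iterations, whereas your inequality $c_2(T_0)\le\Phi=H_g\sum_{T \in \mathcal{T}}y_T$ handles this reuse in the correct direction and still yields the stated bound.
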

\begin{proof}
By Algorithm \ref{update}, at the $k^{th}$ iteration, a subset $\Delta_k$ is chosen and 
we add the arcs in the covering arc subset of $\Delta_k$ to $T^2_0$ for all 
$v \in \Delta_k$. Let $T^{2_k}_0$ be covering arc subset of $\Delta_k$. We can see that 
$c_2(T^{2_k}_0)=\sum_{e \in T^{2_k}_0}\bar{c}_e=\bar{c}(\Delta_k)$. 
In this iteration, we update the dual variables in such a way 
that for each critical node $v \in \Delta_k$, $\sum_{j=1}^{p_v}y_{S^j_v}=\frac{\bar{c}(\Delta_k)}{H_g \times d_k}$ with $d_k=|\Delta_k|$. 
Together with the fact that $\displaystyle \bar{c}(\Delta_k)=\bar{c}(w_k)+ \sum_{v \in \Delta_k}\bar{c}(w_k,v)$ we have
$\sum_{v \in \Delta_k}\sum_{j=1}^{p_v}y_{S^j_v}=\frac{\bar{c}(\Delta_k)}{H_g}=\frac{c_2(T^{2_k}_0)}{H_g}$.
By summing over $l$ be the number of iterations in Phase II, we obtain 
$$\sum_{T \in \mathcal{T}}y_T=\sum_{k=1}^l\sum_{v \in \Delta_k}\sum_{j=1}^{p_v}y_{S^j_v}=\sum_{k=1}^l\frac{\bar{c}(\Delta_k)}{H_g}=\sum_{k=1}^l\frac{c_2(T^{2_k}_0)}{H_g}=\frac{c_2(T_0)}{H_g}$$
which proves that $c_2(T_0) = H_g \sum_{T \in \mathcal{T}}y_T$. By Remark \ref{g}, we have $g \leq D^+_r$ and $H_g \approx \ln g$, hence $c_2(T_0) \leq \ln(D^+_r)\sum_{T \in \mathcal{T}}y_T$.  \hfill $\Box$
\end{proof}
\subsection{Phase III}
We perform Phase III if after Phase II, there exist nodes in $U$ not reachable from $r$ in $G_0$. 
By Lemma \ref{lem:critical},  
they belong or are connected to some Edmonds connected subgraphs of $G_0$. By Theorem \ref{th1}, we can apply an Edmonds-style primal-dual algorithm which tries to cover uncovered Edmonds connected subgraphs of $G_0$ until all nodes in $U$ reachable from $r$. The algorithm 
repeatedly choosing uncovered Edmonds connected subgraph and adding to $A_0$ the cheapest (reduced cost) arc(s) entering it . As the reduced costs have not been modified during Phase II, 
we update first the reduced cost $\bar{c}$ with respect to the dual variables made positive in Phase II. \\
\begin{algorithm}
Update the reduced cost $\bar{c}$ with respect to the dual variables made positive in Phase II\;
\Repeat{every nodes in $U$ reachable from $r$}{
Choose $B$ an uncovered Edmonds connected subgraph \;
Let $y_{B}$ be the associated dual variable to $B$\;
Set $\bar{c}(B) \leftarrow \min\{\bar{c}_e ~|~ e \in \delta^-(B)\}$ ; Set $y_{B} \leftarrow \bar{c}(B)$\;
\ForEach{ $e \in \delta^-(B)$}{
	$\bar{c}_e \leftarrow \bar{c}_e -\bar{c}(B)$\;  
} 
Update $A_0$, $G_0$ and $T_0$ (see below)\;
}
\caption{Algorithm for Phase III}
\end{algorithm}
For updating $A_0$, at each iteration, we add all the saturated arcs belonging to $\delta^-(B)$ to $A_0$.
Among these arcs, we choose only one arc $(u,v)$ with $v \in B$ to add to $T_0$ with a preference for a $u$ 
connected from $r$ in $G_0$. In the other hand, we delete the arc $(x,v)$ with $x \in B$ from $T_0$. We then add to $T_0$ 
an directed tree rooted in $v$ in $G_0$ spanning $B$. If there are sink nodes directly connected to $B$, i.e. 
the path from a critical node $w \in B$ to these nodes contains only sink nodes except $w$. We also add all such paths to $T_0$. 
\begin{lemma}
After Phase III, $T_0$ is a $r$-branching cover. 
\end{lemma}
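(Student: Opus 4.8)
The plan is to verify the three defining properties of an $r$-branching cover for the final $T_0$: that Phase~III terminates with every node of $U$ reachable from $r$ in $G_0$, that $V(T_0)$ is a vertex cover of $G$, and that $T_0$ is a branching carrying a directed $r$-path to each of its nodes (i.e.\ an out-arborescence rooted at $r$).

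First I would justify that Phase~III is well defined and terminates. Since its greedy loop runs until no uncovered critical node remains, after Phase~II every critical node is covered, so Lemma~\ref{lem:critical} applies: every node of $U$ not yet reachable from $r$ either belongs to or is connected to an Edmonds connected subgraph of $G_0$. This is exactly the hypothesis of Theorem~\ref{th1}, whose proof shows that as long as some node of $U$ is unreachable there is an uncovered Edmonds connected subgraph available to process. Each iteration saturates an entering arc and strictly enlarges the set of nodes reachable from $r$; as $|U|\le n-1$, the loop halts, and at halting every node of $U$ is reachable from $r$ in $G_0$.

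Next I would settle the vertex-cover property, which is the easy part. By the corollary establishing that the reachable nodes together with the sink and critical nodes form a vertex cover, $U$ is a vertex cover of $G$, so it suffices to show $U\subseteq V(T_0)$. The Phase~III update, for each processed subgraph $B$, adds an out-arborescence of $B$ rooted at the freshly reached node $v$ together with the directed paths of sink nodes hanging off $B$; hence all nodes of $B$ and their attached sinks enter $V(T_0)$. Combined with the nodes spanned by $T^1_0$ and those introduced by the covering arc subsets of $T^2_0$, every node of $U$ ends up in $V(T_0)$, so $V(T_0)\supseteq U$ is a vertex cover.

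The remaining and hardest part is to prove the branching property, which I would do by induction on the Phase~III iterations with the invariant that $T_0$ restricted to the nodes currently reachable from $r$ is an out-arborescence rooted at $r$, i.e.\ every such non-root node has in-degree exactly one in $T_0$ and lies on a unique directed $r$-path. The base case uses that $T^1_0$ is by construction a directed tree rooted at $r$, while the arcs of $T^2_0$ entering still-unreachable critical nodes do not disturb the invariant because each uncovered critical node is covered exactly once by the greedy step, giving it in-degree at most one. For the inductive step I would examine the update carefully: adding the single entering arc $(u,v)$ gives $v$ in-degree one (with $u$ chosen reachable from $r$ whenever possible), deleting the former internal arc $(x,v)$ with $x\in B$ prevents $v$ from acquiring in-degree two, and reorienting $B$ as an out-arborescence rooted at $v$ endows each node of $B$ with exactly one incoming arc and a unique directed path from $v$, hence from $r$; the sink paths attach as simple directed paths and raise no in-degree above one. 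Because the internal tree of $B$ is oriented away from $v$ and $v$ is entered from outside $B$, no directed cycle is created, so acyclicity and the unique-$r$-path property persist. At termination the invariant, together with the reachability of all of $U$, yields that $T_0$ is an out-arborescence rooted at $r$ spanning $U$, which is precisely an $r$-branching cover. I expect the crux to lie here, specifically in the claim that $B$ admits a spanning out-arborescence rooted at $v$: I would lean on the definition of an Edmonds connected subgraph---every node of $B\cap U$ is connected from every node of $B$ in $G_0$---to guarantee that such a rooted spanning arborescence exists and that the deleted arc $(x,v)$ is indeed the unique former in-tree arc of $v$ inside the previously isolated component, so that the splice is legitimate and all in-degrees stay bounded by one.
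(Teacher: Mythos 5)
Your proposal is, in substance, the same proof the paper gives: the paper's entire argument is the one-sentence assertion that after Phase III every critical and sink node has a directed path from $r$ inside $T_0$ and exactly one entering arc of $T_0$, and your three-part verification (termination with all of $U$ reachable, $U\subseteq V(T_0)$ with $U$ a vertex cover, and the in-degree-one/acyclicity invariant maintained through the Phase III updates) is precisely the detailed justification of that assertion, resting on the same ingredients (Lemma~\ref{lem:critical}, Theorem~\ref{th1}, and the update rule that deletes $(x,v)$ and adds a tree rooted at $v$ spanning $B$).

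One sub-claim in your termination argument is incorrect as stated: covering an uncovered Edmonds connected subgraph $B$ does \emph{not} necessarily ``strictly enlarge the set of nodes reachable from $r$.'' The arc $(u,v)$ chosen to enter $B$ is taken from a reachable $u$ only \emph{when possible}; when no arc of $\delta^-(B)$ leaves a reachable node, $B$ stays unreachable and merely merges into a larger unreachable structure (this is exactly the chain $v_1,v_2,\ldots$ scenario in the proof of Theorem~\ref{th1}). So bounding the number of iterations by $|U|\le n-1$ does not work. The fix is already contained in the first half of your own sentence: each iteration saturates and adds to $A_0$ at least one arc of $\delta^-(B)$, which was empty in $G_0$ before the iteration, so $A_0$ strictly grows and there are at most $|A|$ iterations; combined with Theorem~\ref{th1}'s guarantee that an uncovered Edmonds connected subgraph exists whenever some node of $U$ is unreachable, the loop terminates only when all of $U$ is reachable. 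With that repair, the rest of your argument (in particular the splice analysis keeping in-degrees at most one and avoiding cycles) stands.
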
     
\begin{proof}
We can see that after Phase III, for any critical node or a sink node $v$, there is a path containing only the arcs in $T_0$ from 
$r$ to $v$ and there is exactly one arc in $\delta^-(v) \cap T_0$. 
\end{proof}
\subsection{Performance guarantee} 
We state now a theorem about the performance guarantee of the algorithm.
\begin{theorem}
 The cost of $T_0$ is at most $\max\{2,\ln(D^+_r)\}$ times the cost of an optimal $r$-branching cover.
\end{theorem}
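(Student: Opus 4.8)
The plan is to carry out the usual primal-dual accounting: bound the cost of the returned branching cover $T_0$ against the single dual vector $y$ that the three phases jointly build, and then close the argument with weak LP duality against the dual of $DTC(G)$. First I would verify that the accumulated $y$ is feasible for the dual program, i.e. $\sum_{S \in \mathcal{F}\,:\,e \in \delta^-(S)} y_S \le c(e)$ for every arc $e \in A$. In Phase~I this is immediate from the update rule, since the increment $y_S \leftarrow \alpha$ never exceeds the current nonnegative reduced cost of the arcs of $\delta^-(S)$; in Phase~III it holds for the same reason, $y_B$ being set to $\min\{\bar c_e : e \in \delta^-(B)\}$; and for the variables raised in Phase~II it is exactly Lemma~\ref{phase2_valid}. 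Because Phase~II does not touch $\bar c$ and Phase~III first restores $\bar c$ with respect to the Phase~II duals, the three families of reductions compose without conflict and the combined $y$ is dual feasible. Every $r$-branching cover satisfies all cut constraints $x(\delta^-(S)) \ge 1$, $S \in \mathcal{F}$, hence is a feasible integer point of the primal; so if $OPT$ denotes the cost of an optimal $r$-branching cover, weak duality yields $\sum_{S \in \mathcal{F}} y_S \le OPT$.

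Next I would split the cost as $c(T_0) = c_2(T_0) + c_{\mathrm{I,III}}(T_0)$, where $c_2(T_0)$ is the part of the cost spent in Phase~II and $c_{\mathrm{I,III}}(T_0) = c(T_0) - c_2(T_0)$ is the part paid off by reduced-cost reductions during Phases~I and~III. The first term is already controlled by Theorem~\ref{perf2}, which gives $c_2(T_0) \le \ln(D^+_r)\sum_{T \in \mathcal{T}} y_T$. For the second term I would use the standard identity $c_{\mathrm{I,III}}(T_0) = \sum_S y_S\,\lvert T_0 \cap \delta^-(S)\rvert$, the sum ranging over the sets and Edmonds connected subgraphs whose dual was raised in Phase~I or Phase~III, and then bound $\lvert T_0 \cap \delta^-(S)\rvert \le 2$ uniformly. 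This uniform bound is the heart of the factor~$2$: for a cardinality-two set $S = \{a,b\} \in \mathcal{B}$ it holds because $T_0$ is a branching, so each of $a,b$ receives at most one entering arc of $T_0$ from outside $S$; for a strongly connected component added to $\mathcal{B}$ in Phase~I the construction keeps exactly one entering arc in $T_0$, and likewise an Edmonds connected subgraph covered in Phase~III has exactly one entering arc (as guaranteed in the proof of Theorem~\ref{th1}), so for those the count is only $1$. Consequently $c_{\mathrm{I,III}}(T_0) \le 2\bigl(\sum_{S\ \text{raised in I}} y_S + \sum_{B\ \text{raised in III}} y_B\bigr)$.

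Combining the two estimates and using $\sum_{S \in \mathcal{F}} y_S \le OPT$ gives
\[
c(T_0) \;\le\; 2\sum_{S\ \text{raised in I, III}} y_S \;+\; \ln(D^+_r)\sum_{T \in \mathcal{T}} y_T \;\le\; \max\{2,\ln(D^+_r)\}\sum_{S \in \mathcal{F}} y_S \;\le\; \max\{2,\ln(D^+_r)\}\cdot OPT,
\]
which is the asserted ratio.

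I expect the genuine obstacle to be making the split $c(T_0) = c_2(T_0) + c_{\mathrm{I,III}}(T_0)$ rigorous. A single arc of $T_0$ can have its cost paid across several phases, so I must check that the amount attributed to Phase~II coincides exactly with the quantity $c_2(e)$ appearing in Theorem~\ref{perf2}, that the residual reductions on the Phase~II and Phase~III arcs are fully charged to the Phase~I and Phase~III dual increments through saturation, and (using Lemma~\ref{phase2_valid}) that no arc already saturated in Phase~I is reduced further. A second delicate point is the re-wiring of $T_0$ in Phase~III: when an arc $(x,v)$ with $x \in B$ is removed and replaced, I must observe that such deletions can only decrease each cut count $\lvert T_0 \cap \delta^-(S)\rvert$, so that the factor-two bound survives in the final $T_0$.
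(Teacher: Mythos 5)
Your proposal is correct and follows essentially the same route as the paper's own proof: the same decomposition of $c(T_0)$ into the Phase~II part (bounded via Theorem~\ref{perf2}) and the Phase~I/III part (bounded by $2\sum_B y_B$ using $|\delta^-(B) \cap T_0| \le 2$ for cardinality-two sets and $=1$ for strongly connected components and Edmonds connected subgraphs), closed by weak duality against the dual of $DTC(G)$. The delicate points you flag at the end --- the exact accounting of per-arc cost across phases and the Phase~III re-wiring of $T_0$ --- are precisely the ones the paper addresses (the latter only parenthetically), so your treatment matches the published argument.
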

\begin{proof}
Suppose that $T^*$ is an optimal $r$-branching cover of $G$ with respect to the cost $c$.
 First, we can see that the solution $y$ built in the algorithm is feasible dual solution. Hence
 $c^Ty \leq c(T^*)$. Let $\mathcal{B}$ be the set of all the subsets $B$ in Phase I and Phase III 
 ($B$ is either a subset of cardinality 2 in $\mathcal{F}$ or a subset such that the induced subgraph 
 is a strongly connected component or an Edmonds connected subgraph in $G_0$ at some stage of the algorithm). Recall that we have defined 
$\mathcal{T}$ as the set of the subsets $T$ such that $y_T$ is made positive in Phase II.
We have then 
 $\displaystyle c^Ty= \sum_{B \in \mathcal{B}}y_B + \sum_{T \in \mathcal{T}}y_T $. For any arc $e$ in $T_0$, 
 let us divide the cost $c(e)$ into two parts: $c_1(e)$ the part saturated by the dual variables $y_B$ with 
 $B \in \mathcal{B}$ and $c_2(e)$ the part saturated by the dual variables $y_T$ with 
 $B \in \mathcal{T}$. Hence $c(T_0)=c_1(T_0)+c_2(T_0)$. By Theorem \ref{perf2}, we have 
 $c_2(T_0) \leq \ln(D^+_r)\sum_{T \in \mathcal{T}}y_T$ (note that the replacing in Phase III of 
 an arc $(x,v)$ by another arc 
 $(u,v)$ with $v \in B_i$ do not change the cost $c_2(T_0)$). Let us consider any set $B \in \mathcal{B}$
 by the algorithm, $B$ is the one of the followings:
\begin{itemize}
 \item $|B|=2$. As $T_0$ is a branching so that for all vertex $v \in V$, we have $|\delta^-(v) \cap T_0| \leq 1$. 
Hence, $|\delta^-(B) \cap T_0| \leq 2$.
\item $B$ is a vertex set of a strongly connected component or an Edmonds connected subgraph in $G_0$. We can see obviously that by the algorithm $|\delta^-(B) \cap T_0|=1$.
\end{itemize}
These observations lead to the conclusion that $c_1(T_0) \leq 2\sum_{B \in \mathcal{B}}y_B$. Hence 
\begin{eqnarray*}
c(T_0)=c_1(T_0)+c_2(T_0)& \leq & 2\sum_{B \in \mathcal{B}}y_B+ \ln(D^+_r)\sum_{T \in \mathcal{T}}y_T\\
&\leq &\max\{2, \ln(D^+_r)\}c^Ty \leq \max\{2, \ln(D^+_r)\}c(T^*).
\end{eqnarray*}
\end{proof}
\begin{corollary}
 We can approximate the DTCP within a $\max\{2, \ln(D^+)\}$ ratio.
\end{corollary}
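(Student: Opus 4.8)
The plan is to leverage the decomposition of DTCP into $n$ root-fixed subproblems set up at the start of Section~2, together with the single-root performance guarantee just established. First I would observe that any directed tree cover of $G$ is a branching rooted at some vertex, so an optimal directed tree cover $T^*$ is in particular an optimal $\rho$-branching cover for the vertex $\rho$ at which it happens to be rooted; consequently $c(T^*)=\min_{r\in V}\mathrm{OPT}_r$, where $\mathrm{OPT}_r$ denotes the cost of an optimal $r$-branching cover (set $\mathrm{OPT}_r=+\infty$ for roots admitting no branching cover). The algorithm for DTCP then simply runs the three-phase procedure once for each $r\in V$ and returns the cheapest branching cover produced over all successful runs.

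Next I would relate the root-dependent ratio to the global one and combine the two at the optimal root. For a fixed root $r$, the quantity $D^+_r$ controlling the logarithmic term (through Remark~\ref{g} and Theorem~\ref{perf2}) is the maximum outgoing degree taken over the nodes other than $r$, which is bounded above by $D^+$, the maximum outgoing degree over all of $G$; hence $\max\{2,\ln(D^+_r)\}\le\max\{2,\ln(D^+)\}$ for every $r$. Running the procedure with $r=\rho$ therefore yields, by the preceding theorem, a branching cover of cost at most $\max\{2,\ln(D^+_\rho)\}\cdot\mathrm{OPT}_\rho\le\max\{2,\ln(D^+)\}\cdot c(T^*)$. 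Since the algorithm returns the minimum over all roots, its output costs no more than this bound, which is exactly the claimed $\max\{2,\ln(D^+)\}$ ratio.

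The only delicate point, and the one I would be most careful about, is feasibility across roots: for an arbitrary $r$ there may be no $r$-branching cover at all, and the per-root procedure must then be read as failing (or returning $+\infty$). This causes no trouble for the bound, because the root $\rho$ of the optimal solution is guaranteed to admit a branching cover, so the run at $\rho$ succeeds and already meets the target ratio; taking the minimum over the successful runs can only help. Note also that no extra approximation loss is incurred by aggregating the $n$ runs, since we \emph{select} rather than \emph{combine} solutions, so the overall ratio is inherited verbatim from the single-root analysis.
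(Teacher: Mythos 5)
Your proposal is correct and follows essentially the same route as the paper: the paper's implicit argument (stated in Section 2) is precisely to run the $r$-branching cover algorithm for every root $r \in V$ and return the cheapest output, so the ratio $\max\{2,\ln(D^+_r)\}\le\max\{2,\ln(D^+)\}$ from the single-root theorem carries over at the root of an optimal directed tree cover. Your additional care about infeasible roots and the bound $D^+_\rho \le D^+$ only makes explicit what the paper leaves implicit.
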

\section{Final remarks}
The paper has shown that the weighted Set Cover Problem is a special case of the Directed Tree Cover Problem and the latter can be approximated with 
a ratio of $\max\{2,\ln(D^+)\}$ (where $D^+$ is the maximum outgoing degree of the nodes in $G$) by a primal-dual algorithm. Based on known complexity results for weighted Set Cover, in one direction, this approximation seems to be best possible.\\
In our opinion, an interesting question is whether the same techniques can be applied to design a combinatorial approximation algorithm for Directed Tour Cover. As we have seen in 
Introduction section, a $2\log_2(n)$-approximation algorithm for Directed Tour Cover has been given in \cite{Nguyen2}, but this algorithm is not combinatorial. 

\end{document}